\newcommand{\bi}{\begin {itemize}}
\newcommand{\ei}{\end{itemize}}
\newtheorem{thm}{Theorem}
\newtheorem*{thm*}{Theorem}
\newtheorem{prop}{Proposition}
\newtheorem*{prop*}{Proposition}
\newtheorem{lem}{Lemma}
\newtheorem*{lem*}{Lemma}
\newtheorem{cor}{Corollary}
\newtheorem{ass}{Assumption}
\newcommand{\obsFlip}{F}
\newcommand{\flipSpace}{\mathbb{F}}
\newcommand{\II}{\mathbf{I}}
\newcommand{\flip }{\mathcal{F}}
\newcommand{\ones}{\mathbf{1}}
\newcommand{\norm}{||}
\newcommand{\asV}{\tilde{V}}
\newcommand{\asW}{\tilde{W}}
\newcommand{\asA}{\tilde{a}}
\newcommand{\asphi}{\tilde{\phi}}
\newcommand{\asnu}{\tilde{\nu}}
\newcommand{\asH}{\tilde{H}}
\newcommand{\ash}{\tilde{h}}
\newcommand{\Var}{\mbox{var}}
\newcommand{\E}{E}
\newcommand{\estphi}{\hat{\phi}}
\newcommand{\estmu}{\hat{\mu}}
\title{Inference in generalized linear models with robustness to misspecified variances}
\author[1]{Riccardo De Santis\footnote{To contact:  \textit{riccardo.desantis2@unisi.it}}}
\author[2]{Jelle J. Goeman}
\author[3]{Jesse Hemerik}
\author[4]{Samuel Davenport}
\author[5]{Livio Finos}
\affil[1]{University of Siena, Italy} 
\affil[2] {Leiden University Medical Center, The Netherlands} 
\affil[3]{Erasmus University Rotterdam, The Netherlands}
\affil[4]{University of California San Diego, United States} 
\affil[5]{University of Padova, Italy} 
\date{}
\begin{document}

\maketitle

\begin{abstract}
Generalized linear models usually assume a common dispersion parameter, an assumption that is seldom true in practice. Consequently, standard parametric methods may suffer appreciable loss of type I error control. As an alternative, we present a semi-parametric group-invariance method based on sign flipping of score contributions. Our method requires only the correct specification of the mean model, but is robust against any misspecification of the variance. We present tests for single as well as multiple regression coefficients. The test is asymptotically valid but shows excellent performance in small samples. We illustrate the method using RNA sequencing count data, for which it is difficult to model the overdispersion correctly.
The method is available in the {\tt R} library {\tt flipscores}. 
\end{abstract}

\section{Introduction}\label{section:intro}
When testing for equality of means between two groups, statisticians often worry that there may be a difference in variance between the two groups. The task of properly taking this variance into account, known as the Behrens-Fisher problem \citep{Fisher:1935, Fisher:1941}, has generated a huge statistical literature \citep{B-Freview:1998, B-Freview:2008}. 
In the linear regression model (LM), which generalizes the two-group model, the assumption that all error terms have equal variance is less often questioned. While goodness-of-fit tests and other diagnostics for heteroscedasticity exist \citep{heterosk6, heterosk4, heterosk1, heterosk5, heterosk2, heterosk3}, there is no simple and general way to follow up on obvious lack of fit \citep{critics_test_het}. Still, the problems arising from misspecified variances in regression can be as severe in regression as they are for the two-group model, especially when the variance depends on the covariates in the model in an unknown way: standard errors are too large or too small, and statistical tests can become severely conservative or anti-conservative.

The situation is worse if we broaden the perspective to generalized linear models (GLMs). Overdispersed GLMs, that allow for additional variation between subjects, have become important in many fields; e.g. negative binomial or quasi-Poisson models in RNA sequencing \citep{RnaSeq}.
GLMs and overdispersed GLMs generally use a single dispersion parameter for all subjects. This assumption is similar to the common variance assumption in the linear model, and  violation of the assumption can lead to loss of type I error control in the same way.
For large data sets this problem can be addressed by estimating the variance of the test statistic robustly using the sandwich estimator \citep{sandwich2,sandwich3} or by direct modelling using Generalized Additive Models for Location Scale and Shape \citep{Gamlss}. However, these methods can have poor error control for small sample sizes  \citep{SandwichCritic1,SandwichCritic2, SandwichCritic3, SandwichCritic4}.


In linear models the wild bootstrap \citep{wildboot} addresses the problem of unknown heteroscedasticity in a different way by randomly sign-flipping transformed residuals. This method bypasses the need to model the variance by comparing each subject's contribution to the test statistic with its own sign-flipped copy only. The method of \cite{Hemerik.etal:2020} extends this method to generalized linear models and to situations with nuisance parameters by sign-flipping the contributions to the score statistic rather than the residuals. However, their contribution has two main drawbacks. First, the proven robustness of the methods to variance misspecification is rather limited, relating only to misspecification by a constant. Second, although it has better control of type I error than Sandwich-based methods, it still struggles to control type I error properly for small sample sizes. 


In this paper, we present an improvement of the method of \cite{Hemerik.etal:2020}, addressing both drawbacks. In the first place we develop a standardization of the sign-flip-based test statistics that boosts the convergence to the asymptotic distribution and greatly improves the control of the type one error. Our proposed standardization is innovative because the standardization factor is conditional on the random sign-flips and does not require model refitting as is common for resampling based standardized test statistics \citep[e.g., methods reviewed in][]{Winkler.et.al.2014, permuco}. Second, we prove robustness to any misspecification of the variance, greatly improving the scope of the method.

The structure of this paper is as follows. In Section \ref{section:main_results} we summarize the main results of the paper, placing them in the context of the literature. In Section \ref{section:glm} we introduce the modeling context and assumptions that we will consider. After recalling the existing score flipping test in Section \ref{section:effective},  in Section \ref{sect:standardized} we propose the novel test. Section \ref{sec:robust} contains the most important results of this paper, since here we prove the robustness properties of our tests. An extension to multivariate tests is in Section \ref{sect:multivariate}. Section \ref{sect:simulation} contains the simulation study and Section \ref{section:realdata} a data analysis, which illustrates the relevance of our approach to the analysis of RNA-Seq data.

\section{Contributions of the paper}\label{section:main_results}
We start by summarizing the two main novel contributions of this paper, placing them in the context of the literature. 

\subsection{Robustness to variance misspecification in GLMs}

While many methods have been proposed to remedy the problem variance misspecification, they all come at a price.

A general and popular approach that is robust to variance misspecification is the sandwich estimator \citep{cox1961tests, cox1962further,sandwich1,sandwich2,sandwich3}. However, this method is well-known to have poor type I error control for small sample sizes \citep{SandwichCritic1, SandwichCritic4, SandwichCritic3, SandwichCritic2}. Alternatively, more flexible models can be used to tackle the problem of the strict assumptions made by the GLMs. Generalized additive models for location, scale and shape \citep[GAMLSS;][]{Gamlss} permit to jointly model the mean and the variance as a function of covariates. However, these models tend to be overparametrized, which can also result in poor small sample performance, as we will see in the simulations of Section \ref{sect:simulation}. Furthermore, constructing and evaluating a variance model \citep[GAMLSS;][]{Gamlss} is more complex in GLMs than in the linear model, since the variance of the residuals differs between observations even in a correctly specified model.

A completely different perspective is taken by HulC (Hull based Confidence) \citep{kuchibhotla2023hulc}, a recent robust semiparametric method based on sample splitting, whose key assumption is only the asymptotic median unbiasedness of the estimator. This method gives confidence intervals with theoretical guarantees. However, as a sample splitting method, it requires a minimal sample size, and comes with a relevant loss of power. We will consider this method in the simulations of section \ref{sect:simulation}.

The Wild Bootstrap is highly robust to variance mispecification since it uses within-observation sign flips of the residuals, refitting to the resulting pseudo-observations. 
However, this method does not extend to the GLM context since the new pseudo-observations do not generally respect the original GLM model.

In this paper we adopt the related perspective of the sign-flipping test of \citet{Hemerik.etal:2020}.
In this approach, rather than the residuals, the score contributions are sign-flipped. For this approach, \citet{Hemerik.etal:2020} proved a very limited robustness to variance misspecification; only for variances that are misspecified up to a multiplicative constant. We prove that their test remains asymptotically valid under any unknown misspecification of the variance. 

\subsection{Standardization of the test statistics: marginal second-moment exactness}\label{Subsec:sec_mom_null_inv}

The score-flipping test \citep{Hemerik.etal:2020} that we use as our starting point, has a tendency to anticonservativeness for small sample sizes. We prove that this anticonservative tendency is always present when that test is applied to LMs with correctly specified variances, and observe it---in simulations---in general. To address this problem we develop a novel standardization approach to the test statistics which boosts its convergence to the nominal level. 
Uniquely, the proposed  standardization is conditional on the random sign flips. This conditional standardization ensures marginal second moment exactness, a property not shared by other resampling-based methods \citep{Winkler.et.al.2014, permuco}.

To understand the philosophy of the new standardization approach, it is helpful to revisit the way permutation tests achieve exact control of the type I error. To get such exact control, one needs to ensure that, if the null hypothesis is true, the joint distribution of the test statistics $T(g Y),\ g \in G$, is invariant under all transformations in a group $G$ of the data $Y$.
That is \citep[Definition 1]{Hemerik.goeman:2018}, under the null hypothesis,
\begin{eqnarray} \label{eq:exact}
\big(T (g' Y) \big)_{g' \in G} \overset{d}{=}  \big(T (g' g Y) \big)_{g' \in G}
\end{eqnarray}
for all $g \in G$. A sufficient condition for this is that
$Y \overset{d}{=}  g Y$ for all $g \in G$. Once \eqref{eq:exact} is satisfied, tests can be based on randomly selected transformations $G'$ from $G$.

While (\ref{eq:exact}) is easy to achieve in simple experimental designs, it is difficult or impossible to achieve in more complex settings such as GLMs. Even within the LM framework, most known methods achieve only asymptotic exactness; that is, the distributions of
$\big(T (g' Y) \big)_{g' \in G'}$ and $\big(T (g' g Y) \big)_{g' \in G'}$
are not identical for all $g \in G$, but merely converge to the same (multivariate normal) distribution. 
As a result, type I error control is only asymptotic.

When exact equality in (\ref{eq:exact}) is impossible, a fallback option is to ensure marginal second-moment null-invariance \citep{Commenges2003}, i.e., to let the first and marginal second moments of the distributions in (\ref{eq:exact}) to be equal in finite samples. This way, asymptotic arguments are only needed for mixed and higher order moments, which tend to converge faster.
We achieve marginal second-moment null-invariance by standardizing each $T(g Y)$ by its standard deviation, which depends on $g$. To the best of our knowledge, we are the first to propose such a per-transformation standardization.  


\section{Set-up and assumptions}\label{section:glm}
Assume that we observe $n$ independent observations $y=\{y_1,\dots,y_n\}^T$ from the exponential dispersion family, i.e., 
a density of the form \citep{Agresti:2015}
\begin{equation} \label{eq:model}
    f(y_i; \theta_i, \phi_i)=
    \exp\left\{\frac{y_i\theta_i-b(\theta_i)}{a(\phi_i)}+c(y_i,\phi_i)\right\},
\end{equation}
where $\theta_i$ and $\phi_i$ are respectively the canonical and the dispersion parameter.
We will assume throughout the paper that the model chosen fulfills the usual regularity conditions \citep[Chapter 3]{Azzalini1996}.
We derive mean and variance of the observed outcome, respectively, as
\begin{equation*}
\mu_i=\E(y_i)=b'(\theta_i); \qquad
\Var(y_i)=b''(\theta_i)a(\phi_i),
\end{equation*}
where primes denote derivatives. Without loss of generality, let $a(\phi_i)=\phi_i$. We assume that the mean of $y_i$ depends on observed covariates $(x_i, z_i)$ through the following relation, written in vector-matrix form as
\begin{equation*}
    g(\mu)=\eta=X\beta+Z\gamma
\end{equation*}
where $\mu=(\mu_1,\dots,\mu_n)^T$ denotes the mean vector, $g(\cdot)$ is the link function, taken to operate elementwise on a vector, $(X,Z)$ is the full rank design matrix with $\mathrm{dim}(X)=n \times1$, $\mathrm{dim}(Z)=n \times q$, where $q$ does not depend on $n$, and $( \beta, \gamma)$ are unknown parameters. We consider $\beta$ the parameter of interest, and $\gamma$ as a nuisance parameter. 

Also $\phi_1, \ldots, \phi_n$, are all nuisance parameters; we leave the structure of the dispersion parameters completely unspecified. Dispersion parameters can have very different values due to a variety of causes, such as the omission of some covariates \citep{Agresti:2015}. These $n$ separate dispersion parameters cannot be consistently estimated unless they are known to satisfy some restrictions \citep{NeymanScott:1948}. 
Thus, in practice, the analyst will use some, possibly misspecified, putative model, which imposes such restrictions. This leads to computable estimates ${\estphi}_1,\dots,{\estphi}_n$, which will generally be inconsistent, since we cannot assume the putative model to be correct.

As an example of the set-up, consider the situation that the data are generated according to a negative binomial model with a log link function and a different and unknown dispersion parameter for every observation. The analyst could estimate these dispersion parameters under the additional constraint that $\phi_i = \phi$ for all $i$, or even that $\phi_i = 1$ for all $i$. Alternatively, the dispersion parameters could be modeled as a function of the covariates \citep[Chapter 10]{McCullagh1989}. Such strategies will lead to inconsistent estimates of the dispersion parameters unless the true model happens to fulfil the chosen constraints \citep{sandwich3}.

About the model and the estimation strategy we make the following assumptions.
\begin{ass}\label{ass:link}
    Let \eqref{eq:model} be the true model which generates the data. We assume that the link function $g(\cdot)$ is correctly specified.
\end{ass}
This assumption is crucial since it permits consistent estimation of the regression parameters $\beta, \gamma$, though not of $\phi_1,\ldots, \phi_n$. Indeed, the incorrect estimation of the dispersion parameters $\phi_i$ does not affect the consistency of the estimates of the regression coefficients, as long as the mean (and hence the link function) are well-specified \citep{Agresti:2015}. The corresponding standard errors, however, are no longer reliable.

Now, let $\ell(\beta, \gamma)$ denote the log-likelihood function, 
from which we can derive the score vector with elements
\begin{equation*}
    \frac{\partial \ell(\beta,\gamma)}{\partial \beta}=s_\beta=X^T D V^{-1}(y-\mu); \qquad  \frac{\partial \ell(\beta,\gamma)}{\partial \gamma}=s_\gamma=Z^T D V^{-1}(y-\mu),
\end{equation*}
where, in a compact matrix form, we have $$D=\mathrm{diag}\left\{\frac{\partial \mu_i}{\partial \eta_i}\right\}; \qquad V=\mathrm{diag}\{\mathrm{\Var}(y_i)\}.$$
Taking the derivatives of the score vector we obtain the Fisher information matrix
\begin{equation*}
    \mathcal{I}=
    \begin{pmatrix}
    \mathcal{I}_{\beta,\beta} & \mathcal{I}_{\beta,\gamma} \\
    \mathcal{I}_{\gamma,\beta} &
    \mathcal{I}_{\gamma,\gamma}
    \end{pmatrix}
    =
    \begin{pmatrix}
    X^T W X & 
    X^T W Z \\
    Z^T W X &
    Z^T W Z
    \end{pmatrix}
\end{equation*}
where $W=D V^{-1} D$. Denote by $d_i$, $v_i$, and $w_i$ the $i$-th diagonal elements of $D$, $V$ and $W$, respectively.
Note that all the matrices defined here and in the rest of the paper depend on $n$. We will suppress this dependence in the notation. The reader may assume that all quantities depend on $n$, except when explicitly stated otherwise.

Secondly, we assume that the estimates of $\phi_1, \phi_2, \ldots$, though not consistent, will converge, as stated more formally below. 

\begin{ass}\label{ass:dispersion}
For $i=1,2, \ldots$, we have $\hat\phi_i - \tilde\phi_i \to 0$ in probability as $n \to\infty$, where $K_1 < \tilde\phi_i \leq K_2$, and $K_1, K_2$ are strictly positive constants not depending on $n$. 
\end{ass}
It is known \citep{sandwich2} that the use of maximum likelihood estimation in a misspecified model leads under minimal conditions to a well-defined limit of the estimator. In this sense, it is possible to define a ``limit'' density, even in a misspecified model, which is intended to be the closest to the true density which generates the data in terms of Kullback-Leibler distance \citep{sandwich3}. 
This implies that Assumption 1 generally holds in situations where we estimate $\phi_1, \ldots, \phi_n$ using a restricted model.

We place a tilde symbol on all the quantities obtained when plugging the limits $\asphi_1,\ldots,\asphi_n$ into the model. In particular, let $\asV, \asW$ be the variance and weight matrices obtained by fixing $\phi_i = \asphi_i$. 

We further assume the following assumption related to the Fisher information given by the quantities $\asphi_1,\dots,\asphi_n$. This is a standard condition in regular models \citep{Vaart1998}.
\begin{ass}\label{ass:fisherinfo}
Let $\tilde{\mathcal{I}}$ be the Fisher information matrix for $W=\asW$. The $\lim_{n \xrightarrow{} \infty} n^{-1} \tilde{\mathcal{I}}_{\beta,\beta}$ converges to some positive constant.
\end{ass}

Further, the following mild condition is needed to apply the multivariate central limit theorem \citep[Chapter 5]{Billingsley1986} within the framework of \cite{Hemerik.etal:2020}, that we will apply. It is needed to avoid pathological cases, such as vanishing or dominating observations.
\begin{ass}  \label{ass:lindeberg}
Let $$\asnu_{i,\beta}=\frac{(y_i-\estmu_{i})x_id_{i}}{\tilde v_i}; \qquad \asnu_{i,\gamma}=\frac{(y_i-\estmu_{i})z_id_{i}}{\tilde v_i};$$
define the element-wise score contribution and
\begin{equation*}
    \asnu^*_{i,\beta}=\asnu_{i,\beta}-\tilde{\mathcal{I}}_{\beta,\gamma}\tilde{\mathcal{I}}_{\gamma,\gamma}^{-1}\asnu_{i,\gamma}.
\end{equation*}
We require, for all $\epsilon >0$,
\begin{equation*}
    \lim_{n \xrightarrow{} \infty} \frac{1}{n} \sum_{i=1}^n E[(\tilde{\nu}_{i,\beta}^{*})^2 \mathbf{1}_{\{|\tilde{\nu}_{i,\beta}^*|/\sqrt{n}>\epsilon\}}] \xrightarrow{} 0
\end{equation*}
where $\mathbf{1}_{\{\dots\}}$ is the indicator function, and $n^{-1}\sum_{i=1}^n \Var(\tilde{\nu}_{i,\beta}^{*})$ to converge to some positive constant.
\end{ass}

When the model is correctly specified the last condition implies Assumption \ref{ass:fisherinfo}. In case of misspecified models we do not have this relation, since the so-called ``information identity'' does not hold \citep{Azzalini1996}.

Throughout the paper, we will make a distinction between two situations: the situation that the model is correctly specified, i.e.\ $\asW = W$, and the putative model is true; or the general case that $\asW \neq W$.

\section{Sign-flipping effective score test}\label{section:effective}

In the model described in the previous section we are interested in testing the following null hypothesis
\begin{eqnarray} \label{H_0_true}
H_0:\ \beta= \beta_0\ |\ (\gamma, \phi_1, \dots, \phi_n) \in \Gamma \times \Phi \times \dots \times \Phi
\end{eqnarray}
against a one or two-sided alternative, where $\Gamma \subseteq \mathbb{R}^q$ and $\Phi \subseteq (0,\infty)$.
The formulation of the null hypothesis above makes explicit that only the target parameter $\beta$ is fixed under $H_0$, but the nuisance parameters are unconstrained. 

Since $\phi_1, \dots, \phi_n$, as remarked, are difficult to estimate, we require a test that is robust to misspecification of these parameters. \cite{Hemerik.etal:2020} proposed a general semi-parametric test that has robustness to misspecification of the variance by a constant, and presented promising simulation results suggesting more general robustness properties. We will start from this test.

\cite{Hemerik.etal:2020} used the effective score for $\beta$ as the test statistic. In the model (\ref{eq:model}) the effective score is
\[
S = s_\beta - \mathcal{I}_{\beta, \gamma} \mathcal{I}_{\gamma, \gamma}^{-1} s_\gamma.
\] 
In the context of generalized linear models, the statistic can be written as
\begin{equation*}
    S=n^{-1/2}X^TW^{1/2}(I-H) V^{-1/2}(y-{\estmu})
\end{equation*}
where
\begin{equation}\label{eq:hproj}
    H=W^{1/2}Z(Z^TW Z)^{-1}Z^TW^{1/2}
\end{equation}
is the hat matrix, and $\estmu$ is the vector of fitted values of the model under the null hypothesis. Note that, since $S$ is an inner product of the two $n$-vectors $n^{-1/2}V^{-1/2}(I-H) W^{1/2}X$ and $y-{\estmu}$, it can be written as a sum of $n$ terms, which we call the effective score contributions.

To calculate the critical value, \cite{Hemerik.etal:2020} proposed to sign-flip the effective score contributions, randomly multiplying each score contribution by $-1$ or $1$.  In matrix notation these sign flips can be represented by a random diagonal matrix $\flip$ of dimension $n$, whose non-zero elements are independent random variables that take values $-1$ and 1 with equal probability.
Consequently, the effective sign-flip score statistic for a given flip matrix $\flip=\obsFlip$ is defined as
\begin{equation}\label{eq:eff}
    S(\obsFlip)=n^{-1/2}X^TW^{1/2}(I-H)V^{-1/2}\obsFlip (y-{\estmu}). 
\end{equation}
Note that for $\flip = \II$ (the identity matrix) we recover the observed effective score.

An asymptotic $\alpha$-level test is then derived as follows. First, \cite{Hemerik.etal:2020} prove (asymptotic) invariance of the first two moments of the test statistic under the action of $\flip$, i.e.\ that $E\{S(\flip)\} - E\{S(\II)\} = 0$ and $\Var\{S(\flip)\} - \Var\{S(\II)\} \to 0$ as $n \to\infty$. Next, they apply the Lindberg-Feller multivariate central limit theorem to show that, for independently drawn flip matrices $F_2, \ldots, F_g$, where $g$ does not depend on $n$, the vector $S(\II), S(F_2), \ldots, S(F_g)$ converges to a vector of independent and identically distributed random variables. Lemma 1 of \cite{Hemerik.etal:2020} then allows to obtain an asymptotic $\alpha$-level test for the null hypothesis \eqref{H_0_true} against a one or two-sided alternative. Abbreviate $S_i = S(F_i)$ and denote the corresponding sorted values as $S_{(1)}\le\dots\le S_{(g)}$. Without loss of generality, consider testing \eqref{H_0_true} against $H_1:\beta>\beta_0$. The test of \citet{Hemerik.etal:2020} rejects the null hypothesis if 
\begin{equation}\label{example.test}
    S_{1}>S_{(\lceil (1-\alpha)g \rceil)}
\end{equation}
where $\lceil \cdot \rceil$ represents the ceiling function. Analogous procedures for $H_1:\beta<\beta_0$ or $\beta \ne \beta_0$ are straightforward.

The definition of the test involves $W$, which is unknown. In practice we only have an estimate of $W$ available, which converges to $\asW$ by Assumption 1. For the theoretical results of the remainder of this paper, we will treat $\asW$, though not $W$, as known. To motivate this, we note  that any error terms relating to estimation of $\asW$ are of lower asymptotic order with respect to the parameter estimation \citep{Barndorff:1989, pace1997principles}. 



In the special case of the linear model, the effective sign-flip score test \eqref{eq:eff} simplifies to $S(\obsFlip)=\sigma n^{-1/2}X^T(I-H)V^{-1/2}\obsFlip (y-{\estmu})=\sigma n^{-1/2}X^T(I-H)\obsFlip (I-H)y$ (where \eqref{eq:hproj} reduces to $H=Z(Z^T Z)^{-1}Z^T$).
In that case, it relates closely to methods that resample the residuals, summarized in the excellent reviews of \cite{Winkler.et.al.2014} and \cite{permuco}.
\citet{kennedy1996randomization}, \citet{freedman1983nonstochastic} and \citet{dekker2007}
 use the same test statistic, while that of the Still-White procedure and of \citet{draper1966testing} is the so-called ``basic score'' of \citet{Hemerik.etal:2020}: $\sigma n^{-1/2}X^T\obsFlip (I-H)y$. \citet{ter1992permutation}, in contrast, makes permutations of the residuals estimated under the full model. 
In all these methods, the null distribution is derived by permuting (or sign-flipping) the residuals and, subsequently refitting the linear model.
This re-fitting approach can not be extended to GLMs, since the pseudo-responses does not retain crucial characteristics of the original response (e.g., count data may not be integer after permuting the residuals, or binomial responses are not between 0 and 1 anymore). This invalidates the required refitting step of the methods, which would also be computationally costly in GLMs.
In contrast, \cite{Hemerik.etal:2020} prove that their effective score statistic \eqref{eq:eff} has asymptotically the same distribution (under the null hypothesis) both for observed and flipped test statistics, hence avoiding the refitting and permitting easy extension to the GLM context.

\section{Standardized sign-flip score test}\label{sect:standardized}

We restate the result of \cite{Hemerik.etal:2020} concerning the validity of their test below for the special case of GLMs. Our alternative proof, given in the Appendix with all other proofs, emphasizes the importance of asymptotic null-invariance.

\begin{thm} \label{theo:asympt_eff}
Assume that the variances are correctly specified, that is, $\asV=V$, and that Assumption \ref{ass:link}-\ref{ass:lindeberg} hold.  For $n\xrightarrow{} \infty$, the test that rejects $H_0$ if \eqref{example.test} holds is an asymptotically $\alpha$ level test.
\end{thm}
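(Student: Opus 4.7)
The plan is to follow the framework of \cite{Hemerik.etal:2020} and reduce the theorem to three ingredients: (i) the vector $(S(\II),S(F_2),\dots,S(F_g))$ converges jointly to a vector of i.i.d.\ centered Gaussians; (ii) this implies asymptotic exchangeability; and (iii) the exchangeability conclusion plus Lemma~1 of \cite{Hemerik.goeman:2018} yields the level-$\alpha$ statement. Step (iii) is essentially a citation; the real work is in establishing (i). The overall strategy is to rewrite each $S(F_k)$ as a normalized sum of independent score contributions (times a sign), handle the estimation error in $\hat\mu$ and $\hat W$ as lower-order remainders, and then invoke a multivariate Lindeberg--Feller argument together with the Cram\'er--Wold device.

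In more detail, I would first substitute $\hat\mu$ by the true mean $\mu$ in \eqref{eq:eff}: a Taylor expansion of $y-\hat\mu$ around the truth, combined with the $\sqrt{n}$-consistency of $(\hat\beta,\hat\gamma)$ that follows from the assumptions, produces the leading term $n^{-1/2}X^T\asW^{1/2}(I-\tilde H)\asV^{-1/2}F(y-\mu)$ plus a remainder that is $o_p(1)$ uniformly over the (finite) collection of sign-flip matrices considered. Replacing $W,V$ by $\asW,\asV$ uses Assumption~\ref{ass:dispersion} and standard continuity arguments; the correct-specification hypothesis $\asV=V$ then tells us that each $y_i-\mu_i$ has mean zero and variance $\tilde v_i$, so for every flip matrix $F_k$ the score contributions $a_i\, F_{k,ii}(y_i-\mu_i)$ are independent across $i$ with the same mean and variance as the unflipped contributions, where $a_i$ is the $i$-th entry of $\asV^{-1/2}(I-\tilde H)\asW^{1/2}X$. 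Assumption~\ref{ass:fisherinfo} controls the normalizer, and Assumption~\ref{ass:lindeberg} supplies the Lindeberg condition on the summands.

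To get joint convergence to an i.i.d.\ Gaussian limit I would apply Cram\'er--Wold: fix weights $c_1,\dots,c_g$ and write $\sum_k c_k S(F_k)=n^{-1/2}\sum_i a_i(y_i-\mu_i)\bigl(c_1+\sum_{k\ge 2}c_k F_{k,ii}\bigr)$. Conditionally on the (independent) flips, Lindeberg--Feller under Assumption~\ref{ass:lindeberg} gives a Gaussian limit whose conditional variance is $\sigma^2\sum_k c_k^2+\text{cross terms}$; the cross terms $\sum_{k\neq l}c_k c_l \bigl(n^{-1}\sum_i a_i^2 \tilde v_i F_{k,ii}F_{l,ii}\bigr)$ converge to zero in probability because $\{F_{k,ii}F_{l,ii}\}_i$ is an i.i.d.\ Rademacher sequence independent of the data with mean zero (and analogously when one of $k,l$ equals $1$, using $E F_{l,ii}=0$). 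Hence the unconditional limit is $N\bigl(0,\sigma^2\sum_k c_k^2\bigr)$, which is the characteristic function of i.i.d.\ Gaussians, proving (i). The rejection rule \eqref{example.test} then has asymptotic level $\alpha$ by the discrete-order-statistic argument of Lemma~1 of \cite{Hemerik.goeman:2018}.

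The main obstacle, I expect, is justifying asymptotic null-invariance in the sense above: not the marginal CLT for each $S(F_k)$ (that is a routine Lindeberg--Feller application) but the vanishing of the asymptotic cross-covariances and the uniform control of the $\hat\mu\to\mu$ and $\hat W\to\asW$ replacements across all $g$ flip matrices at once. This is what distinguishes the present proof from the original argument of \cite{Hemerik.etal:2020} and why the explicit moment-matching under $\asV=V$ has to be carried out with care.
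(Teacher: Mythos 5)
Your overall architecture matches the paper's: reduce the claim to joint convergence of $(S(\II),S(F_2),\dots,S(F_g))$ to i.i.d.\ centered Gaussians and then invoke the quantile lemma (the paper uses Lemma~1 of \cite{Hemerik.etal:2020}); your Cram\'er--Wold/conditioning treatment of the cross-covariances is a legitimate, slightly more explicit version of the paper's ``independent flips give uncorrelated statistics'' step. The genuine gap is in the step you yourself flag as the main obstacle and then leave unresolved: the claim that replacing $y-\hat\mu$ by $y-\mu$ leaves an $o_p(1)$ remainder ``uniformly over the flips'', justified only by Taylor expansion and $\sqrt{n}$-consistency of $(\hat\beta,\hat\gamma)$. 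That is not enough. Writing $a=(I-H)W^{1/2}X$, the paper's Lemma \ref{lem:hmat} gives $S(F)=n^{-1/2}a^TF(I-H)V^{-1/2}(y-\mu)+o_p(1)$, so the term your leading-order expression discards is $-n^{-1/2}a^TFHV^{-1/2}(y-\mu)$. For $F=\II$ it vanishes exactly because $Ha=0$ (this is the whole point of using the effective score), but for a random flip the projection no longer annihilates the nuisance direction: the discarded term has conditional variance $n^{-1}a^TFHFa$, which under $\sqrt{n}$-consistency alone is only $O_p(1)$. Its expectation over $\flip$ equals $n^{-1}a^T\mathrm{diag}(H)a\le n^{-1}\|a\|^2\max_i h_{ii}$ by Lemma \ref{lem:diag}, and since $n^{-1}\|a\|^2$ is $O(1)$ by Assumption \ref{ass:fisherinfo}, negligibility requires the vanishing-leverage property $\max_i h_{ii}\to 0$. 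The paper has to establish this separately from consistency of the fitted means (Lemma \ref{lem:hspar}, adapted from \cite{Huber2009}), and nothing in your proposal supplies it or a substitute.

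Relatedly, your assertion that each flipped contribution has ``the same mean and variance as the unflipped contributions'' is true only for the idealized leading term $a_iF_{k,ii}(y_i-\mu_i)$; for the actual statistic the second moments of $S(\II)$ and $S(\flip)$ differ in finite samples by exactly the quantity above --- this is Proposition \ref{prop:anticons} of the paper and the motivation for the standardized test --- so their asymptotic agreement is the content to be proved, not a consequence of correct specification. Supplying the $\max_i h_{ii}\to 0$ argument (or any bound forcing $n^{-1}\sum_i a_i^2h_{ii}\to 0$) would close the gap; the remaining ingredients of your proposal, namely the Lindeberg condition from Assumption \ref{ass:lindeberg}, the normalization from Assumption \ref{ass:fisherinfo}, and the Cram\'er--Wold step for the cross terms, are sound and consistent with the paper's argument.
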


If null-invariance holds only asymptotically, the distributions of $S(\II)$ and $S(\flip)$ are not identical, but both converge to the same distribution. It is natural to suppose that the more aspects of the distributions of $S(\II)$ and $S(\flip)$ are equal to each other in all finite samples, the closer the resulting test is to an exact test. 

The test of \cite{Hemerik.etal:2020} does not have the second-moment null-invariance property (Section \ref{Subsec:sec_mom_null_inv}): though the first moments of $S(\II)$ and $S(\flip)$ are equal in finite samples, the variances are not. In fact, we prove in Theorem 2 below that for finite samples $S(\II)$ always has a larger variance than $S(\flip)$, at least in the linear model with correctly specified variance. As a  consequence, the test shows a tendency to anti-conservativeness in finite samples. Since the variance of the observed test statistic is always larger than its flipped counterpart, extreme values in $S(\II)$ are more probable than in the reference distribution of $S(\flip)$. Consequently, the test tends to reject the null hypothesis too often.


\begin{prop}\label{prop:anticons}
Consider a normal regression model with identity link. Assume that the variances are correctly specified, that is, $\asV=V$, and that Assumption \ref{ass:link}-\ref{ass:lindeberg} hold. For finite sample size, the effective sign-flip score statistic defined as in \eqref{eq:eff} has $\Var\{S(\II)\} > \Var\{S(\flip)\}$.
\end{prop}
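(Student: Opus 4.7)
The plan is a direct finite-sample computation. I first specialise the general GLM quantities to the normal model with identity link and correctly specified variance. With $g$ the identity and $V = \sigma^2 I$, one has $D = I$ and $W = V^{-1} = \sigma^{-2} I$, so the hat matrix defined in \eqref{eq:hproj} collapses to the ordinary least-squares projector $P_Z = Z(Z^TZ)^{-1}Z^T$. Writing $M = I - P_Z$ and $\epsilon = y - X\beta_0 - Z\gamma \sim N(0,\sigma^2 I)$ under $H_0$, the null-fitted residuals are $y - \estmu = M(y - X\beta_0) = M\epsilon$ (since $MZ = 0$), and \eqref{eq:eff} becomes
$$
S(\flip) = n^{-1/2}\sigma^{-2}\, X^T M\, \flip\, M \epsilon.
$$

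Next I would compute the two variances explicitly. Since $E[\epsilon] = 0$ and $E[\flip_{ii}] = 0$, both $E\{S(\II)\}$ and $E\{S(\flip)\}$ vanish, so each variance equals its second moment. For $\flip = \II$, using that $M$ is symmetric and idempotent,
$$
\Var\{S(\II)\} = n^{-1}\sigma^{-2}\, X^T M X = n^{-1}\sigma^{-2}\sum_{i=1}^n (MX)_i^2.
$$
For random $\flip$, I expand $X^T M \flip M\epsilon = \sum_i (MX)_i \flip_{ii} (M\epsilon)_i$ and square it. Independence of $\flip$ from $\epsilon$ together with $E[\flip_{ii}\flip_{jj}] = 0$ for $i \neq j$ annihilates all off-diagonal terms; $E[\flip_{ii}^2] = 1$ and $E[(M\epsilon)_i^2] = \sigma^2 M_{ii}$, the $i$-th diagonal of the covariance $\sigma^2 M$ of $M\epsilon$, then give
$$
\Var\{S(\flip)\} = n^{-1}\sigma^{-2}\sum_{i=1}^n (MX)_i^2\, M_{ii}.
$$

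Subtracting and using $1 - M_{ii} = (P_Z)_{ii}$ yields
$$
\Var\{S(\II)\} - \Var\{S(\flip)\} = n^{-1}\sigma^{-2}\sum_{i=1}^n (MX)_i^2\, (P_Z)_{ii},
$$
which is nonnegative termwise since $(P_Z)_{ii} = z_i^T(Z^TZ)^{-1} z_i \geq 0$. The main obstacle -- the only point that is not pure bookkeeping -- is the strict inequality: one must rule out the pathological alignment in which $(MX)_i$ vanishes at every index $i$ for which $(P_Z)_{ii} > 0$. The full-rank assumption on $(X,Z)$ forces $MX \neq 0$, and under any standard non-degeneracy of the nuisance design (for instance whenever $Z$ has no zero rows, so that $(P_Z)_{ii} > 0$ for every $i$) at least one summand is strictly positive, giving the claimed strict inequality.
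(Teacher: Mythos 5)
Your computation is correct and follows essentially the same route as the paper's proof: your variance gap $n^{-1}\sigma^{-2}\sum_{i}(MX)_i^2 (P_Z)_{ii}$ is exactly the paper's $n^{-1}a^T\mathrm{diag}(H)a$ with $a=(I-H)W^{1/2}X$, which the paper obtains by showing the per-flip difference equals $n^{-1}a^TFHFa\ge 0$ and then averaging over the flip group. The non-degeneracy caveat you raise for strictness is handled identically in the paper, which assumes the model contains an intercept so that $h_{ii}>0$ for every $i$.
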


Proposition \ref{prop:anticons} is formulated for the linear model only. In the non-linear and/or variance-misspecified case $\Var\{S(\II)\}$ and $\Var\{S(\flip)\}$ are also unequal in general. In that case the term in the asymptotic expansion that makes $\Var\{S(\II)\} > \Var\{S(\flip)\}$ in the linear model is also present. However, in that case it is not the only asymptotic term, so it is difficult to make a fully general statement on anti-conservativeness. However, also in such models we see a tendency to small-sample anti-conservativeness in all simulations, as was also observed by \cite{Hemerik.etal:2020}.

The concept of second-moment null-invariance suggests that we can improve level accuracy of the test if we can modify the flipped scores to have equal variances.
The following result allows for such a procedure. It provides an expression for the variance of the flipped score, depending on the sign flip that has been applied.

\begin{lem}\label{lem:variancecorrection}
The variance of the sign-flipped score, as depending on $F$, is
\begin{equation}\label{eq_variance}
\Var\{S(\obsFlip)\} = n^{-1}X^T W^{1/2} (I-H)\obsFlip(I-H)\obsFlip (I-H)W^{1/2}X+o_p(1).
\end{equation}
\end{lem}
These variances can be estimated by plugging in $\hat{\gamma}$ and $\tilde W$. By dividing the flipped scores by their standard deviations, we 
obtain what we call the standardized sign-flip score statistics,
\begin{equation}\label{eq_standardized}
S^*(\obsFlip) = S(\obsFlip)/\Var\{S(\obsFlip)\}^{1/2}.
\end{equation}
We use the statistics $S^*(\obsFlip)$ in the same way as the original test uses the statistics $S(\obsFlip)$.
The estimate of $\Var\{S(\obsFlip)\}$ can be calculated for each $F$ in linear time in $n$, as we show in Lemma \ref{lem:computational} in the Appendix. 

Analogously with the test defined in \eqref{example.test}, consider without loss of generality testing \eqref{H_0_true} against $H_1:\beta>\beta_0$. Take the test that rejects the null hypothesis if
\begin{equation}\label{example.teststd}
    S^*_{1}>S^*_{(\lceil (1-\alpha)g \rceil)}.
\end{equation}
The following proposition establishes the null invariance and validity of the test.
\begin{prop}\label{prop:standard}
Assume that the variances are correctly specified, that is, $\asV=V$, and that Assumptions \ref{ass:link}-\ref{ass:lindeberg} hold. The standardized sign-flip score statistic is finite sample second-moment null-invariant. The test is asymptotically exact.
\end{prop}

The proposed standardization conditional on the sign flip matrix $F$ has a similar purpose as the refitting done by methods designed for the linear model, as described in Section \ref{Subsec:sec_mom_null_inv}. It addresses both issues in that standardization than make such methods difficult to generalize to GLMs. First, there is no refitting to incongruent pseudo-observations; second, the computational effort of standardziation is limited.

\section{Robustness of the test} \label{sec:robust}

\cite{Hemerik.etal:2020} showed promising simulation results suggesting robustness against variance misspecification, but a formal proof was given only for the special case that all variances were misspecified by the same multiplicative constant. In this section we provide a formal proof of general robustness of the test to variance misspecification, both for the novel test derived in the previous section and for the original effective score test of \cite{Hemerik.etal:2020}. This is the most important result of this paper.

Incorrect specification of the model variance means that
\begin{equation*}
    \E\left\{(y-\mu)(y-\mu)^T\right\}=V \ne \asV,
\end{equation*}
where, as we recall $V$ is the true covariance of the outcomes, and $\asV$ is the limit of the estimates under the chosen estimation procedure. We first revisit the simple case considered by \cite{Hemerik.etal:2020}, in which the variance is misspecified by a multiplicative constant. In this case, the properties of the standardized sign-flip score test are not affected by this misspecification. In particular, we have second-moment null-invariance. 

\begin{prop}\label{prop:constrobust}
Assume that Assumptions \ref{ass:link}-\ref{ass:lindeberg} hold. If the variances are misspecified by any finite constant $c>0$, that is, $V=c\asV$, the standardized sign-flip score statistic is second-moment null-invariant. The test that rejects $H_0$ if \eqref{example.teststd} holds is  asymptotically exact.
\end{prop}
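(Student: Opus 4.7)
The plan is to reduce Proposition \ref{prop:constrobust} to Proposition \ref{prop:standard} by showing that the variance misspecification $V = c\asV$ rescales both the numerator and the denominator of the standardized test statistic by a common factor, which then cancels in both the ordering used to form the critical value in \eqref{example.teststd} and the moment identities required for second-moment null-invariance. The enabling device is to write the response as $y = \mu + \sqrt{c}\,\eta$, where $\eta$ has covariance $\asV$, so that $\eta$ plays the role of a correctly-specified error vector satisfying all of Assumptions \ref{ass:dispersion}--\ref{ass:lindeberg}.

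Next I would expand $y - \estmu$. Using a Taylor expansion of the null MLE (justified by Assumption \ref{ass:dispersion} together with the regularity conditions of Section \ref{section:glm}) one obtains an asymptotic representation of the form $y - \estmu = \sqrt{c}\,(I - A)\eta + o_p(n^{-1/2})$, where $A$ is a projection-type matrix associated with the nuisance directions; in the Gaussian identity-link case this identity holds exactly, with $A$ the $\asV^{-1}$-weighted projection onto the column space of $Z$. Substituting into \eqref{eq:eff}, with $\asH$ as in \eqref{eq:hproj}, and noting that the diagonal matrices $\asV^{-1/2}$ and $\obsFlip$ commute, I would conclude that
\[
S(\obsFlip) \;=\; \sqrt{c}\,S_{0}(\obsFlip) + o_p(1),
\]
where $S_{0}(\obsFlip)$ denotes the statistic that would be computed under correct specification from the hypothetical response $\mu + \eta$.

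The denominator
\[
\Var\{S(\obsFlip)\} = n^{-1} X^T \asW^{1/2}(I-\asH)\obsFlip(I-\asH)\obsFlip(I-\asH)\asW^{1/2}X + o_p(1)
\]
depends only on the putative weight matrix $\asW$ and on the fixed design, and therefore does not pick up the factor $c$. Consequently $S^{*}(\obsFlip) = \sqrt{c}\,S_{0}^{*}(\obsFlip) + o_p(1)$ for every $\obsFlip$, including $\obsFlip = \II$. Because the multiplicative factor $\sqrt{c}$ is common to all $g$ flips, it cancels in the ordering $S^{*}_{(1)} \le \cdots \le S^{*}_{(g)}$, so the decision rule \eqref{example.teststd} coincides with its correctly-specified analogue from Proposition \ref{prop:standard}; this yields finite-sample exactness in the Gaussian identity-link case and asymptotic exactness in the remaining GLM cases. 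Second-moment null-invariance follows immediately by multiplying the corresponding equalities from Proposition \ref{prop:standard} through by $\sqrt{c}$ (first moments) and $c$ (second moments).

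The main obstacle is verifying that the remainder in the expansion $y - \estmu = \sqrt{c}\,(I-A)\eta + o_p(n^{-1/2})$ is uniformly controlled under variance misspecification and propagates only into the $o_p(1)$ term in $S(\obsFlip)$, simultaneously for all flips $\obsFlip$; this is where Assumptions \ref{ass:dispersion}--\ref{ass:lindeberg} and the Lindeberg--Feller framework already in play for Proposition \ref{prop:standard} are essential. In the Gaussian identity-link case the expansion is exact rather than asymptotic, which is precisely what makes finite-sample exactness available there.
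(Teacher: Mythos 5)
Your proposal is correct and rests on the same key observation as the paper's proof: a misspecification by a common multiplicative constant $c$ enters the statistic only as a global scale factor ($S(\obsFlip)=\sqrt{c}\,S_0(\obsFlip)$ in your formulation, equivalently $\Var\{S^*(\obsFlip)\}=c$ for every $\obsFlip$ in the paper's), which leaves the standardization constant and the ordering in \eqref{example.teststd} unaffected, so everything reduces to Proposition \ref{prop:standard}. The paper states this in two lines by computing $V\asV^{-1}=cI$ directly; your rescaling $y=\mu+\sqrt{c}\,\eta$ is just a more explicit packaging of the identical argument.
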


The proposition uses the property that the test is invariant to multiplication by a constant. This result is relevant, for instance, in models with a common unknown dispersion parameter, such as normal regression model. In such cases the test can be performed, and retains second-moment null-invariance, without the need to estimate the common dispersion parameter. We can save ourselves the effort of estimating it, taking simply $\estphi=1$. This special situation coincides with the standard parametric framework based on the quasi-likelihood approach. 

The main result of this paper concerns the case of general variance misspecification. The following theorem, which is a strong improvement over the properties shown by \cite{Hemerik.etal:2020}, shows that we can still get an asymptotic exact test.
\begin{thm}\label{theo:robust}
Assume that the variances are misspecified, that is, $V \ne \asV$ and that Assumptions \ref{ass:link}-\ref{ass:lindeberg} hold. For $n\xrightarrow{} \infty$, the standardized sign-flip score statistic is asymptotically second-moment null-invariant. The test that rejects $H_0$ if \eqref{example.teststd} holds is an asymptotically $\alpha$ level test.
\end{thm}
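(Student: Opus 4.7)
The plan is to extend Proposition \ref{prop:constrobust}, which covered only constant misspecification, to the general case by combining a careful variance calculation under the true $V$ with the asymptotic argument of Theorem \ref{theo:asympt_eff}. Write $S(F) = c^T F(y - \estmu)$ with $c = n^{-1/2}\asV^{-1/2}(\II-\asH)\asW^{1/2}X$. The goal is to show that $S^*(\II)$ and $S^*(F)$ share a common limit distribution, which will yield asymptotic second-moment null-invariance and, via the permutation argument of \cite{Hemerik.goeman:2018}, asymptotic level $\alpha$ for the test.

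The first step is to compute the limiting variance of $S(F)$ under the \emph{true} $V$. Using consistency of $\estmu$ (Assumption \ref{ass:dispersion} together with the correctly specified mean), $y - \estmu = y - \mu + O_p(n^{-1/2})$; inserting this into $S(F)$ and averaging over the flips, the cross terms vanish because $E[f_{ii}f_{jj}] = 0$ for $i\neq j$, leaving
\[
\Var\{S(F)\} = \sum_i c_i^2 v_i + o(1).
\]
This expression involves the true variances $v_i$ but is the \emph{same} for $F = \II$ and for random $F$; by Assumption \ref{ass:fisherinfo} it converges to a positive constant $\sigma^2$. A parallel covariance computation using $E[f_{j,ii}]=0$ shows $\mathrm{Cov}(S(F_j), S(F_k)) \to 0$ for independently drawn flips with $F_j\neq F_k$ (including $F_j = \II$), so the $g$ statistics become asymptotically uncorrelated.

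Next I would analyse the estimator $\widehat{\Var}\{S(F)\} = n^{-1} X^T \asW^{1/2}(\II-\asH)F(\II-\asH)F(\II-\asH)\asW^{1/2}X$. Setting $A_i = n^{-1/2}((\II-\asH)\asW^{1/2}X)_i$ and expanding the quadratic form separates a deterministic diagonal contribution $\sum_i A_i^2(\II-\asH)_{ii}$ from a mean-zero off-diagonal piece $\sum_{i\neq j}A_iA_j(\II-\asH)_{ij}f_{ii}f_{jj}$. Since $\asH$ has fixed rank $q$, $\asH_{ii} = O(1/n)$ and $\asH_{ij} = O(1/n)$ for $i\neq j$, so the diagonal term converges to a deterministic constant $V^*$ independent of $F$, while the off-diagonal term has variance bounded by $\max_{i\neq j}(\II-\asH)_{ij}^2\cdot(\sum_i A_i^2)^2 = O(n^{-2})$ and is therefore $o_p(1)$. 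Applying the multivariate Lindeberg-Feller CLT (Assumption \ref{ass:lindeberg} supplies the Lindeberg condition with expectations under the true distribution) to the effective score contributions and invoking the Cramér-Wold device via the argument of Theorem \ref{theo:asympt_eff} gives
\[
(S(\II), S(F_2), \ldots, S(F_g)) \xrightarrow{d} N(0, \sigma^2 I_g).
\]
Combined with $\widehat{\Var}\{S(F_j)\} \xrightarrow{p} V^*$ and Slutsky, $(S^*(\II), S^*(F_2), \ldots, S^*(F_g))$ converges jointly to i.i.d.\ $N(0, \sigma^2/V^*)$. Asymptotic second-moment null-invariance (shared zero mean and shared variance $\sigma^2/V^*$) is immediate, and Lemma 1 of \cite{Hemerik.goeman:2018} applied to this exchangeable Gaussian limit delivers the asymptotic $\alpha$-level property of the test based on \eqref{example.teststd}.

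The main obstacle is conceptual as well as technical: under misspecification one generally has $\sigma^2 \neq V^*$, so the standardized statistic no longer converges to a standard normal but only to a common Gaussian law whose variance depends on the unknown ratio between the true and working variances. Validity of the test is nevertheless preserved because \emph{both} $\sigma^2$ and $V^*$ are independent of $F$ in the limit; pinning this down rigorously rests on two delicate ingredients that must be controlled uniformly in $F$: the exact conditional centring $E[S(F)|y]=0$ that comes from the symmetry of the sign flips, and the $O(1/n)$ decay of the off-diagonal entries of $\asH$. The latter is what makes the $F$-dependent fluctuation of $\widehat{\Var}\{S(F)\}$ asymptotically negligible and what permits the Lindeberg-Feller CLT to extend jointly across the $g$ independent flips. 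Carrying out these bounds uniformly, so that Slutsky can be applied simultaneously to all $g$ components and the joint exchangeable Gaussian limit can be identified, is the principal technical content of the proof.
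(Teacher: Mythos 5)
Your overall strategy coincides with the paper's: show that the true variances of $S(\II)$ and $S(\obsFlip)$ converge to a common limit $\sigma^2$, show that the standardizing constant converges to a common (generally different) limit $V^*$, and then combine the Lindeberg--Feller CLT with Slutsky and the quantile argument (the paper invokes Lemma 1 of \cite{Hemerik.etal:2020} rather than \cite{Hemerik.goeman:2018}, since the invariance is only asymptotic). Your central observation --- that under misspecification $\sigma^2\neq V^*$ but validity survives because both limits are flip-independent --- is exactly the point of the theorem and is well articulated. However, two steps as written contain genuine gaps.

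First, the expansion $y-\estmu=y-\mu+O_p(n^{-1/2})$ followed by ``the cross terms vanish'' does not dispose of the estimation error. The correction $c^T\obsFlip(\estmu-\mu)$ is a sum of $n$ terms each of order $O_p(n^{-1})$, hence $O_p(1)$ a priori --- the same order as the leading term. The paper's Lemma \ref{lem:hmat} is needed to write $\asV^{-1/2}(\estmu-\mu)=\asH\asV^{-1/2}(y-\mu)\{1+o_p(1)\}$, which turns the statistic into $n^{-1/2}\asA^T\obsFlip(I-\asH)\asV^{-1/2}(y-\mu)+o_p(1)$. For $\obsFlip=\II$ the extra $\asH$ is annihilated by $(I-\asH)$, but for a random flip the resulting $\obsFlip\asH\obsFlip$-type terms survive; they are negligible only after averaging over flips (Lemma \ref{lem:diag}) and invoking $\max_i\ash_{ii}\to0$. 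This is precisely where the paper does its work under misspecification --- bounding $\mathrm{diag}(\asH B+B\asH-\asH B\asH)$ above and below by multiples of $\ash_{\sup}$, with $B=V\asV^{-1}$ bounded by Assumption \ref{ass:dispersion} --- and it is absent from your derivation of $\sigma^2=\sum_i c_i^2 v_i$.

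Second, the justification ``$\asH$ has fixed rank $q$, so $\asH_{ii}=O(1/n)$ and $\asH_{ij}=O(1/n)$'' is false. Fixed rank only gives $\sum_i\ash_{ii}=q$, i.e., the \emph{average} leverage is $q/n$; an individual $\ash_{ii}$ can be as large as $1$ (a single high-leverage design point), in which case your bound on the $\obsFlip$-dependent fluctuation of the standardizing constant, and the negligibility of the terms in the previous paragraph, both fail. The paper instead proves $\max_i\ash_{ii}\to0$ as a consequence of uniform consistency of the fitted means (Lemma \ref{lem:hspar}, adapted from Huber), and controls off-diagonal entries through the projection identity $\sum_k\ash_{ik}^2=\ash_{ii}$. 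With these two repairs your argument essentially becomes the paper's proof.
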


The following corollary enlarges the robustness property to the effective sign-flip score test, as an immediate consequence from the proof of the preceding theorem.
\begin{cor}
Assume that the variances are misspecified, that is, $V \ne \asV$, and that Assumptions \ref{ass:link}-\ref{ass:lindeberg} hold. For $n\xrightarrow{} \infty$, the effective sign-flip score statistic is asymptotically second-moment null-invariant. The test that rejects $H_0$ if \eqref{example.test} holds is an asymptotically $\alpha$ level test.
\end{cor}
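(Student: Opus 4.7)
The plan is to extract from the proof of Theorem \ref{theo:robust} the pieces that pertain to the unstandardized statistic $S(\flip)$, since the standardization introduced in \eqref{eq_standardized} is merely a post-hoc rescaling by an estimator of $\Var\{S(\flip)\}$. The key observation is that second-moment null-invariance of the standardized statistic $S^*$ can only hold asymptotically if $\Var\{S(\II)\}/\Var\{S(\flip)\} \to 1$, which is equivalent (using Assumption \ref{ass:fisherinfo} and the fact that the common limit is a finite positive constant) to $\Var\{S(\II)\} - \Var\{S(\flip)\} \to 0$. So the asymptotic variance equality needed for the corollary is already established inside the proof of the theorem.

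First, I would note that first-moment null-invariance is immediate in every finite sample: conditionally on the data, $E\{S(\flip)\mid y\} = 0$ because $\flip$ is diagonal with independent mean-zero Rademacher entries and $S(\flip)$ is linear in those entries, and under $H_0$ the unflipped score has $E\{S(\II)\} \to 0$ by consistency of $\hat\mu$ (no misspecification of the mean is assumed). Second, the second-moment statement reduces to comparing the quadratic forms $n^{-1}X^T W^{1/2}(I-H) V (I-H) W^{1/2}X$ and $n^{-1}X^T W^{1/2}(I-H)\flip \asV \flip (I-H) W^{1/2}X$; the asymptotic vanishing of the difference of these two expressions, under the general misspecified regime $V \neq \asV$, is exactly the claim that drives the standardization argument in Theorem \ref{theo:robust}, and I would simply invoke it.

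Third, for the level statement I would mirror the argument used for Theorem \ref{theo:asympt_eff}: applying the Lindeberg-Feller multivariate central limit theorem to the vector $\bigl(S(\II), S(F_2), \dots, S(F_g)\bigr)$, where the Lindeberg condition is supplied by Assumption \ref{ass:lindeberg}, yields joint convergence to a centered Gaussian vector. The asymptotic variance equality established in the previous step guarantees that the limiting Gaussian has iid coordinates, so Lemma 1 of \cite{Hemerik.etal:2020} can be applied directly to the sorted values $S_{(1)} \le \cdots \le S_{(g)}$ to give $\Pr\{S_1 > S_{(\lceil (1-\alpha)g\rceil)}\} \to \alpha$ under $H_0$. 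The main obstacle, or rather the one subtlety worth flagging, is that this argument provides no control of the rate of convergence: as Proposition \ref{prop:anticons} shows, $\Var\{S(\II)\} > \Var\{S(\flip)\}$ strictly in finite samples in the linear model, so the corollary must be read strictly as an asymptotic statement, with the standardized variant of Theorem \ref{theo:robust} giving the genuinely improved finite-sample behavior.
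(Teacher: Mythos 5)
Your proposal is correct and follows essentially the same route as the paper, which offers no separate proof but declares the corollary an immediate consequence of the proof of Theorem \ref{theo:robust}: that proof already establishes $\Var\{S(\II)\}-\Var\{S(\flip)\}\to 0$ for the \emph{unstandardized} statistic, after which the Lindeberg--Feller argument and Lemma 1 of \cite{Hemerik.etal:2020} from the proof of Theorem \ref{theo:asympt_eff} give the level claim exactly as you describe. The only blemish is notational: your displayed quadratic forms misplace the misspecification (the true $V$ enters both variances through $\E\{(y-\mu)(y-\mu)^T\}$, sandwiched between putative $\asV^{-1/2}$ factors, giving the matrix $B=V\asV^{-1}$ in the paper's notation), but since you explicitly invoke the theorem's computation rather than rederiving it, this does not affect the validity of the argument.
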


\section{Multivariate test}\label{sect:multivariate}
Until now we have considered hypotheses about a single parameter $\beta \in \mathbb{R}$. We now generalize the test of the previous section to $\beta \in \mathbb{R}^d, \; d < n-q$. We consider a standard asymptotic setting where $d$ is fixed while $n$ increases. The null hypothesis of interest is now given by:

\[\label{H_0_multiv}
H_0: \beta=\beta_0 \in \mathbb{R}^d |\ (\gamma, \phi_1, \dots, \phi_n) \in \Gamma \times \Phi \times \dots \times \Phi, 
\]
where $\Gamma \subseteq \mathbb{R}^q$ and $\Phi \subseteq (0,\infty)$, which reduces to the null hypothesis \eqref{H_0_true} if $d=1$.

For this multivariate setting, we have to generalize the assumptions of Section \ref{section:glm}. Assumption \ref{ass:dispersion} remains unchanged while Assumptions \ref{ass:fisherinfo} and \ref{ass:lindeberg} are replaced by their multivariate counterparts, respectively,
\begin{ass}\label{ass:fisherinfomult}
The $\lim_{n \xrightarrow{} \infty} n^{-1} \tilde{\mathcal{I}}_{\beta,\beta}$ converges to a positive definite matrix.
\end{ass}

\begin{ass}  \label{ass:lindebergmult}
We require, for all $\epsilon >0$,
\begin{equation*}
    \lim_{n \xrightarrow{} \infty} \frac{1}{n} \sum_{i=1}^n E[\norm\tilde{\nu}_{i,\beta}^{*}\norm^2 \mathbf{1}_{\{\norm\tilde{\nu}_{i,\beta}^*\norm/\sqrt{n}>\epsilon\}}] \xrightarrow{} \mathbf{0},
\end{equation*}
and $n^{-1}\sum_{i=1}^n \Var(\tilde{\nu}_{i,\beta}^{*})$ to converge to a positive definite matrix, where $\norm \cdot \norm$ denotes the $\ell^2$ norm and $\mathbf{0}$ is a $d$-dimensional zero vector.
\end{ass}

\citet[Section 4]{Hemerik.etal:2020} derived a generalization of the sign-flipped effective score test statistic, as follows. Noting that the effective score is now a $d$-dimensional vector $S(F)=(S^1(\obsFlip),\dots,S^d(\obsFlip))^T$, an asymptotically exact $\alpha$-level test can be constructed by using the idea of the nonparametric combination methodology \citep{pesarin2001}. The test statistic takes the form 
\begin{equation*}
    T(F)=\left\{S(F)\right\}^T M \left\{S(F)\right\}
\end{equation*}
where $M$ is any non-zero matrix. Usually $M$ is chosen to be a symmetric matrix, and in general this choice influences the distribution of the power between the alternatives (see \cite{Hemerik.etal:2020} for details). A common choice for $M$ is the inverse of an estimate of the effective Fisher information of $\beta$, if available.

As in Section \ref{sect:standardized}, we can improve control of type I error by standardizing the score vector. Indeed, the same reasoning of Theorem \ref{prop:standard} applies, showing that the sign-flip standardized score vector is finite sample second-moment null-invariant. The definition of the test is analogous by noting that now $$\Var\{S(\obsFlip)\} = n^{-1}X^T W^{1/2} (I-H)\obsFlip(I-H)\obsFlip (I-H)W^{1/2}X+o_p(1)$$ is a $d \times d$ matrix and hence the standardized score 
\begin{equation*}
    S^*(\obsFlip) = S(\obsFlip)/\Var\{S(\obsFlip)\}^{1/2}
\end{equation*}
is a $d$-dimensional vector. We can therefore define the test statistic as
\begin{equation*}
    T^{*}(F)=\left\{S^*(F)\right\}^T  M  \left\{S^*(F)\right\}.
\end{equation*}

Assume we observe values $T^*_1,\dots, T^*_g$, where $T^*_1=T^*(\II)$ is the observed test statistic, while the sorted values are $T^*_{(1)}\le\dots\le T^*_{(g)}$.
Consider the test that rejects the null hypothesis if
\begin{equation}\label{example.mult}
    T^*_{1}>T^{*}_{(\lceil (1-\alpha)g \rceil)}.
\end{equation}
The following Proposition states that the test is second-moment null-invariant and asymptotically exact.
\begin{prop}\label{prop:mult_standard}
Assume that the variances are correctly specified, that is, $\asV=V$, and that Assumptions \ref{ass:link}, \ref{ass:dispersion}, \ref{ass:fisherinfomult} and \ref{ass:lindebergmult} hold. The $d$-dimensional standardized sign-flip score vector is finite sample second-moment null-invariant. The test that rejects $H_0$ if \eqref{example.mult} holds is an asymptotically $\alpha$ level test.
\end{prop}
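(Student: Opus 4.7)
The plan is to mirror the scalar argument used for Proposition \ref{prop:standard} while upgrading each step to the vector-valued score $S(F)\in\mathbb{R}^d$. There are three pieces to assemble: finite-sample second-moment null-invariance of $S^{*}(F)$, joint asymptotic normality of $(S^{*}(\II), S^{*}(F_2), \ldots, S^{*}(F_g))$ with an i.i.d.\ Gaussian limit, and conversion of this into asymptotic $\alpha$-level control via an exchangeability lemma.

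For the finite-sample piece I would compute directly from $S(F) = n^{-1/2} X^T W^{1/2} (I-H) V^{-1/2} F (y-\estmu)$. Using that $F$ is diagonal with entries $\pm 1$ so that $F^2 = I$, the first moment of $S(F)$ is unchanged by sign flipping, and $\Var\{S(F)\}$ admits the same closed-form expression as in the univariate case, now a $d\times d$ matrix. Multiplying on the left by $\Var\{S(F)\}^{-1/2}$ then produces $S^{*}(F)$ whose covariance equals $I_d$ and whose first moment is the same for every $F$, which is precisely finite-sample second-moment null-invariance in the vector setting.

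For the asymptotic step I would apply the multivariate Lindeberg--Feller central limit theorem, permitted by Assumption \ref{ass:lindebergmult}, to the stack $(S(\II), S(F_2), \ldots, S(F_g))$ conditional on the drawn flips. Within-block covariance converges by Assumption \ref{ass:fisherinfomult} to a positive definite effective-information limit $\tilde{\Sigma}$; cross-block covariances between $S(F_j)$ and $S(F_k)$ for $j\ne k$ vanish because the diagonal entries of $F_j$ and $F_k$ are independent and mean zero, replicating the cancellation used in \cite{Hemerik.etal:2020}. Hence the blocks are asymptotically i.i.d.\ $\mathcal{N}_d(\zeros, \tilde{\Sigma})$. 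By Assumption \ref{ass:fisherinfomult} and the continuous mapping theorem, $\Var\{S(F)\}^{-1/2}$ converges in probability to $\tilde{\Sigma}^{-1/2}$ for each $F$, and Slutsky transports the limit to joint convergence of the standardized scores to i.i.d.\ $\mathcal{N}_d(\zeros, I_d)$. A further continuous mapping applied to $s\mapsto s^T M s$ yields i.i.d.\ limits for $(T^{*}(\II), T^{*}(F_2), \ldots, T^{*}(F_g))$, after which Lemma~1 of \cite{Hemerik.goeman:2018}, used as in the proof of Theorem \ref{theo:asympt_eff}, gives asymptotic level $\alpha$ for the rule \eqref{example.mult}.

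The main obstacle I anticipate is the matrix inverse square root in the standardization. I must show, uniformly over the $g$ drawn flip matrices, that the estimated $\Var\{S(F)\}$ is positive definite for all sufficiently large $n$, so that $\Var\{S(F)\}^{-1/2}$ is well-defined, continuous in its entries, and converges jointly with the numerator. This rests on Assumption \ref{ass:fisherinfomult} together with uniform control of the $o_p(1)$ remainder in the variance expression; because $g$ is fixed independent of $n$, a union bound over the $g$ flips closes the gap, and the remainder of the argument then proceeds along the lines of the univariate Proposition \ref{prop:standard}.
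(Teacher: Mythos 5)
Your proposal is correct and follows essentially the same route as the paper, whose own proof is only a pointer to Theorem 3 of \cite{Hemerik.etal:2020} together with the continuous mapping theorem; your write-up simply fills in the details that reference contains (finite-sample second-moment invariance of the standardized vector, joint multivariate Lindeberg--Feller convergence of the stacked scores to i.i.d.\ $\mathcal{N}_d(\zeros, I_d)$ blocks, continuous mapping to the quadratic forms, and the rank-based rejection lemma). The only slip is bibliographic: the final step should invoke Lemma 1 of \cite{Hemerik.etal:2020} (as in the proof of Theorem \ref{theo:asympt_eff}), not \cite{Hemerik.goeman:2018}.
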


Finally, the following theorem shows that the robustness properties of Theorem \ref{theo:robust} are inherited by this multivariate extension.
\begin{thm}\label{theo:mult_robust}
Assume that the variances are misspecified, that is, $V \ne \asV$ and that Assumptions \ref{ass:link}, \ref{ass:dispersion}, \ref{ass:fisherinfomult}, \ref{ass:lindebergmult} hold. For $n\xrightarrow{} \infty$, the $d$-dimensional standardized sign-flip score vector is asymptotically second-moment null-invariant. The test that rejects $H_0$ if \eqref{example.mult} holds is an asymptotically $\alpha$ level test.
\end{thm}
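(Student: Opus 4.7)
The plan is to lift the univariate robustness argument of Theorem~\ref{theo:robust} to the $d$-dimensional effective score by combining the multivariate Lindeberg--Feller CLT (available via the Cram\'er--Wold device under Assumption~\ref{ass:lindebergmult}) with continuous-mapping arguments for the matrix standardization and for the quadratic form $T^*(F) = S^*(F)^T M S^*(F)$. Throughout I work under $H_0$ and treat $F_2, \dots, F_g$ as i.i.d.\ sign-flip matrices, independent of the data.

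First I would establish first- and second-moment asymptotic null-invariance of the unstandardized score $S(F)$. Because only the variance is misspecified, the mean is still correctly specified: consistency of $\estmu$ gives $E[S(\II)] \to 0$, while $E[S(F)] = 0$ exactly from $E[F] = 0$ and $F \perp y$. The leading-order true covariance of $S(F)$ is $n^{-1} X^T \asW^{1/2} (I-\asH) \asV^{-1/2}\, F\, V\, F\, \asV^{-1/2} (I-\asH) \asW^{1/2} X$, and since both $V$ and $F$ are diagonal with $F^2 = I$, the identity $F V F = V$ removes the $F$-dependence. Thus $S(\II)$ and each $S(F_j)$ share the same asymptotic covariance $\Sigma$. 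Joint asymptotic normality then follows by Cram\'er--Wold together with the Lindeberg condition for arbitrary directions $c^T \asnu_{i,\beta}^*$ from Assumption~\ref{ass:lindebergmult} and the independence of the $F_j$, giving $(S(\II), S(F_2), \dots, S(F_g)) \xrightarrow{d} (Z_1, \dots, Z_g)$ with $Z_j$ i.i.d.\ $\mathcal{N}(0, \Sigma)$.

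Next I would handle the matrix standardization. The expression for $\Var\{S(F)\}$ from Section~\ref{sect:multivariate} converges in probability to the same deterministic positive-definite matrix $\Sigma^{\dagger}$ for every $F_j$: for $F_j = \II$ this is immediate using idempotence of $(I-\asH)$; for random $F_j$ it follows because $\mathrm{tr}(\asH) = q$ is fixed as $n \to \infty$, so the leverages $(I-\asH)_{ii}$ average out and the bilinear $F$-dependent part concentrates around an expectation that matches the $\II$ case. Positive-definiteness of $\Sigma^{\dagger}$ is guaranteed by Assumption~\ref{ass:fisherinfomult}. Continuity of the matrix inverse square root on the positive-definite cone, combined with Slutsky's theorem, then yields joint convergence of $(S^*(\II), S^*(F_2), \dots, S^*(F_g))$ to i.i.d.\ multivariate Gaussian vectors with common covariance $(\Sigma^{\dagger})^{-1/2} \Sigma (\Sigma^{\dagger})^{-1/2}$. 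This is precisely the claimed asymptotic second-moment null-invariance. A final continuous-mapping step on the quadratic form $T^*(F)$, followed by Lemma~1 of \cite{Hemerik.goeman:2018}, delivers the asymptotic $\alpha$-level of the test \eqref{example.mult}.

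The hard part will be verifying that $\Var\{S(F)\}$ has the same probability limit for every $F_j$ despite its explicit bilinear $F$-dependence through the factor $(I-\asH) F (I-\asH) F (I-\asH)$. This requires a careful control of the deviation between the realized value and its expectation over the sign flips, exploiting that the leverages of $\asH$ vanish as $n \to \infty$ while $q$ remains fixed. Once that concentration argument is in place, the rest of the proof is a routine lift of the scalar argument of Theorem~\ref{theo:robust} to the multivariate setting via Cram\'er--Wold and continuous mapping.
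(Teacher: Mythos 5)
There is a genuine gap at the central step. You claim that the true covariance of $S(F)$ equals $n^{-1}X^T\asW^{1/2}(I-\asH)\asV^{-1/2}FVF\asV^{-1/2}(I-\asH)\asW^{1/2}X$, so that the identity $FVF=V$ removes the $F$-dependence exactly. That expression is wrong: the sign flips act on the \emph{estimated} residuals $y-\estmu$, and since $\estmu$ is itself a function of $y$, substituting the expansion of Lemma \ref{lem:hmat} places an additional projection factor between $F$ and the true residuals. The correct leading term of each entry of the covariance matrix is $n^{-1}\asA_1^TF(I-\asH)B(I-\asH)F\asA_2$ with $B=V\asV^{-1}$, and $F(I-\asH)B(I-\asH)F$ is \emph{not} independent of $F$, because $(I-\asH)B(I-\asH)$ is not diagonal when $V\ne\asV$. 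This is precisely the non-trivial content of the theorem: the paper reduces the discrepancy to $n^{-1}\asA_1^T\{F(\asH B+B\asH-\asH B\asH)F\}\asA_2$, averages over all $2^n$ flips via Lemma \ref{lem:diag} to extract the diagonal of $\asH B+B\asH-\asH B\asH$, bounds that diagonal by the maximal leverage using the boundedness of $B$ from Assumption \ref{ass:dispersion}, sends it to zero with Lemma \ref{lem:hspar}, and converts the result into the marginal statement needed for the CLT via the law of total variance --- all applied elementwise to the $d\times d$ covariance matrix, exactly as in Theorem \ref{theo:robust}. Your proposal bypasses all of this with an exact cancellation that does not hold.

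You do correctly flag the analogous $F$-dependence in the standardizing matrix through $(I-\asH)F(I-\asH)F(I-\asH)$, but you defer the required concentration argument as ``the hard part'' without carrying it out --- and once the $FVF=V$ shortcut is removed, the same unproved control is needed for the numerator as well. The surrounding architecture (Cram\'er--Wold with Assumption \ref{ass:lindebergmult}, joint CLT for $(S(\II),S(F_2),\dots,S(F_g))$, continuity of the inverse square root, continuous mapping for $T^*$) is consistent with the paper's, so the proof could be completed by replacing the false cancellation with the leverage-averaging argument of Theorem \ref{theo:robust}; as written, however, the key claim of asymptotic second-moment null-invariance is unsupported.
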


\section{Simulation study}\label{sect:simulation}
\begin{figure}[!ht]
\includegraphics[width=12cm]{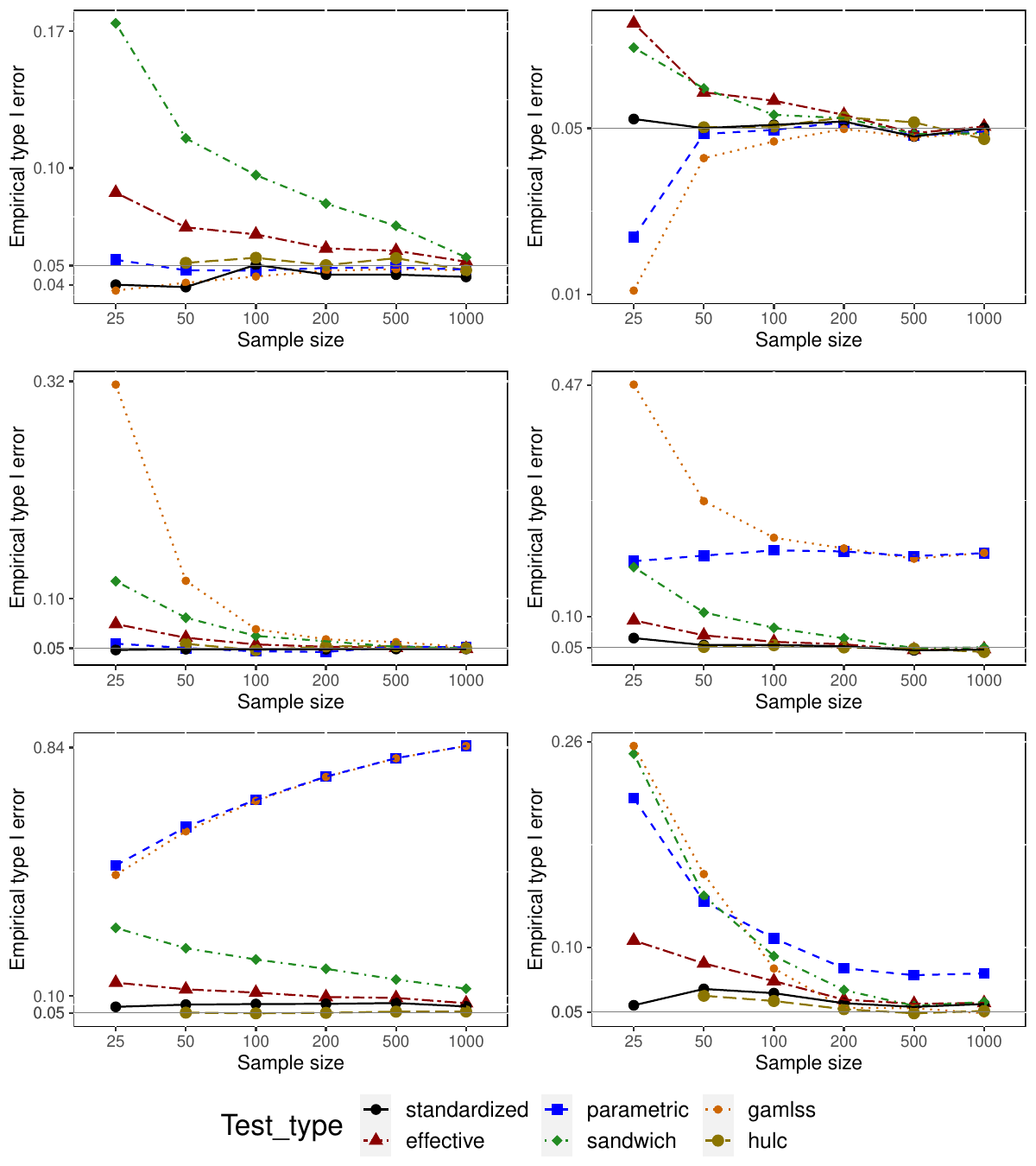}
\caption{Type I error control comparison. \textbf{Top-left:} correct Poisson model. \textbf{Top-right:} correct Logistic model. \textbf{Middle-left:} Normal with nuisance heteroscedasticity. \textbf{Middle-right:} Normal with target heteroscedasticity. \textbf{Bottom-left}: false Poisson model. \textbf{Bottom-right:} two groups Negative-binomial.}
\label{fig:errorcontrol}
\end{figure}

\begin{figure}[!ht]
\includegraphics[width=12cm]{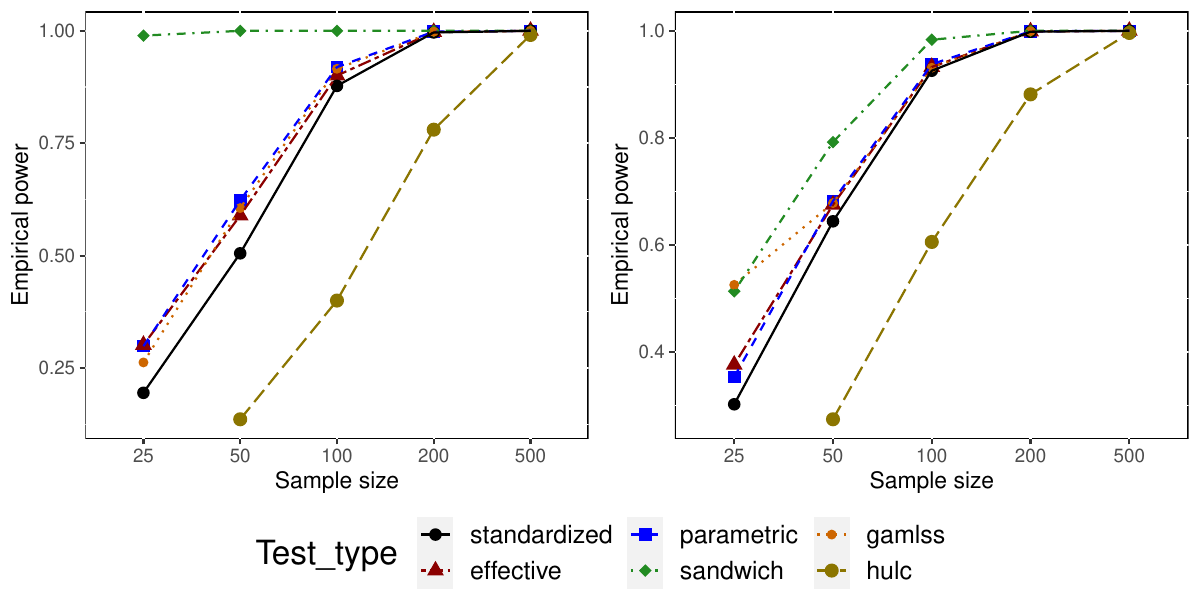}
\caption{Power comparison. \textbf{Left:} correct Poisson model. \textbf{Right:} correct Normal model. }
\label{fig:powercontrol}
\end{figure}
We explore six different settings to compare empirically the type I error control of the usual parametric approach (by considering the Wald test), the effective and standardized flip-score tests, the Wald test based on the use of the sandwich estimator of the variance to correct for variance misspecification, the GAMLSS, modeling the variance as a function of the full model, and the HulC method, inverting the estimated confidence interval to perform hypothesis testing. A total of 5000 simulations have been carried out for each setting. The covariates have been drawn from a multivariate normal distribution, with $X \in \mathbb{R}$ and $Z \in \mathbb{R}^3$. The three nuisance covariates have correlation with the target variable equal to $(0.5,0.1,0.1)$, while the true parameter is set to $\beta=0$. The null hypothesis considered is $H_0:\beta=0$ against a two-sided alternative, with a nominal significance level $\alpha=0.05$. Different sample sizes are considered $(25,50,100,200,500,1000)$. The sign-flip tests are performed through the {\tt R} library {\tt flipscores}. Note that for the HulC method we could not perform simulations for the smallest sample size, since after the sample splitting there were not enough observations (only four) to evaluate the model in the sub-samples.

The top two plots of Figure \ref{fig:errorcontrol} are settings with correctly specified models, respectively a Poisson and a Logistic regression models. The middle two plots represent normal models with neglected heteroscedasticity, which depends either on a nuisance covariate or on the tested variable, i.e. respectively $\Var(y_i)=4z^2_i$ and $\Var(y_i)=4x^2_i$. In the bottom-left plot of Figure \ref{fig:errorcontrol} a Poisson model was fitted when the true distribution was negative binomial with dispersion parameter $\phi=1$, that is, additional heteroscedasticity relative to the Poisson model that depends on the mean. The bottom-right plot displays the results of a two-sample test fitting a negative binomial. A common dispersion parameter was assumed for the fitted  negative binomial model, but the two groups are unbalanced (proportion equal to 2/3 and 1/3), and generated from two distributions with different dispersion parameter ($0.4$ and 1).

In the two top plots we observe that, in case of correctly specified models, the standardized test is close to the nominal level, even with $n=25$, improving the asymptotic convergence of the effective test. The parametric test shows few rejections with small sample size for the logistic model, due to a poor approximation of the likelihood. The sandwich shows slow convergence in both cases, which is unsatisfactory since we are dealing with a well-specified model. The GAMLSS shows a low convergence, explained by the greater number of parameters involved. The HulC method is always appreciably close to the nominal level. The middle-left plot shows a similar behavior. 

The last three plots show the failure of the parametric test in presence of some forms of variance misspecification, where the rejection fraction converges to a level far from the nominal for increasing sample size. The standardized test outperforms its competitors in all cases, being closest to the nominal level. In particular, the improvement over the effective test and further over the sandwich test is clear. Further, the GAMLSS surprisingly shows a failure in the middle-right plot, while in the bottom-left the reason is due to the impossibility of modeling the variance adopting a Poisson model. In the last plot it shows a slower convergence compared to the sign-flip tests. On the other side the HulC method is always close to the nominal level, with the best performance in the two bottom plots.

In the final two plots it is remarkable to see that the standardized test seems to converge to the effective test faster than going to the nominal level, leading to an initial worsening of the true level, with a recovery for larger sample size.

Finally, Figure \ref{fig:powercontrol} contains an evaluation of the power of the tests with two well-specified models, respectively a Poisson and a Normal model, with true parameter equal respectively to 0.3 and 1. The results for the sandwich estimator are given for completeness, although they are not comparable for small sample size, since that method has no control of the type I error. We see that the improvement of type I error control of the standardized test with respect to the effective naturally costs some power. Analogously, we see some power loss also with respect to the parametric model and the GAMLSS as expected, but this difference is remarkably small. On the other side, the loss in power of the HulC method is remarkable, much higher with respect to the standardized test.

\section{Real data analysis: RNA-Seq data}\label{section:realdata}
In RNA sequencing data \citep{RnaSeq} a common aim is to find genes that are differentially expressed across a group of units. The usual analysis adopts a negative binomial regression model, since the observed target variables are counts, and overdispersion relative to the Poisson distribution is standard. However, the variance model generally assumes a fixed mean-variance structure with a common dispersion parameter among the groups of interest, which can be problematic, as we will show. 

From the Cancer Genome Atlas (TCGA) \citep{TCGAdataset}, we have taken the TCGA-LIHC dataset of Liver Hepatocellular Carcinoma (HCC) \citep{TCGALIHCdataset}. The TCGA-LIHC consists of 20,119 genes for 344 patients with a primary tumor. We performed a very limited pre-filtering, deleting only the genes with zero total count. The target covariate is the pathological stage of the tumor. We treat it as a binary variable, splitting it between first pathological stage versus all higher stages. A total of 170 patients have a first pathological stage of the tumor, while 174 patients have higher stages. We further included two covariates in the fitted model: gender and age.

The state-of-the-art method for the analysis of these data, DESeq2 \citep{RnaSeq}, uses a negative binomial model with a dispersion parameter that is allowed to differ between genes, but does not depend on covariates. In particular, it does not depend on pathological stage. We tested this assumption for each gene using a GAMLSS model. The null hypothesis that dispersion did not depend on tumor stage was rejected for 5,967 out of 20,119 genes at the unadjusted 5\% level. Since we would expect approximately 1,000 rejections if the DESeq2 model fits well, this gives clear indication of lack of fit of that model, at least in some of the genes. Based on the simulations in Section \ref{sect:simulation} we would, therefore, expect DESeq2 to be anti-conservative for these data.

Next, we fitted both the Poisson regression model and the negative binomial regression to each gene, and applied the Standardized sign-flip test using $5\,000$ flips (the default choice). We adjusted for multiple testing using the Benjamini-Hochberg correction \citep{BenjHoch95} at $\alpha=0.05$.
We compared the results with those of DESeq2 \citep{RnaSeq}.
This procedure estimates the nuisance parameters with an approach that shares information across genes. After obtaining a p-value for each gene, DESeq2 also applies the Benjamini-Hochberg correction at $\alpha=0.05$ internally.

DESeq2 outputs a total of $1\,450$ NA-values, due to its automatic pre-filtering: $1\,059$ genes were filtered out due to detected outliers and 391 because of low counts. This pre-filtering is meant to increase the power of the method, since it reduces the multiple testing burden, while removing genes for which the method has low power anyway. In contrast, the Negative Binomial regression gave 88 NA-values due to lack of convergence, while the Poisson regression gives no errors. 

Table \ref{tab:my_label} gives the raw number of rejections for DESeq2 and the sign-flip tests in the second column, while the third column shows the number of rejections considering only the genes where no methods returned NA-values. 

\begin{table}[h]
        \centering
        \begin{tabular}{||c|c|c||}
            \hline Method & No. of rejections & Filtered No. of rejections \\ \hline DESeq2&$3\,360$ & $3\,358$ \\Poisson sign-flip&$5\,109$ & $5\,032$\\Negative binomial sign-flip&$4\,833$ & $4\,765$\\ \hline
        \end{tabular}
        \caption{Number of rejections for three methods}
        \label{tab:my_label}
    \end{table}

We found that for this dataset, the standardized sign-flip test for the Poisson and Negative Binomial regression perform similarly, in fact they share the 96.1\% of the conclusions. This confirms that our test is not so sensitive to (wrong) assumptions about the variance function. In contrast, DESeq2 obtains fewer rejections than both sign flip methods, despite the risk of anti-conservativeness due to the misspecification of the model.

\begin{figure}[!ht]
\includegraphics[width=12cm]{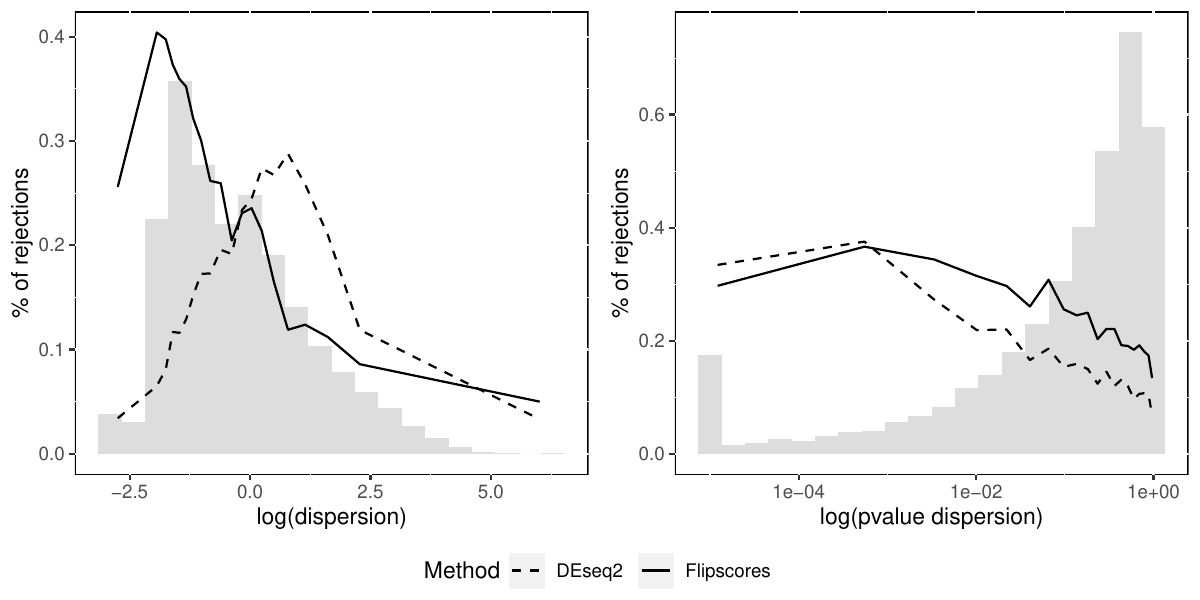}
\caption{Percentage of rejections \textbf{Left:} compared to the dispersion parameter. \textbf{Right:} compared to the p-value of the test for equality of dispersion between groups.}
\label{fig:rnaseqplot}
\end{figure}

Figure \ref{fig:rnaseqplot} shows the proportion of rejections as function of the maximum likelihood estimates of the dispersion parameters, on the left, and as function of the $p$-value of the test of equality of the dispersion parameter between the two groups. Further, the overlaid histogram represents the frequencies of the values on the $x$ axis. Since the two sign-flip tests have very similar results, we plot only the Negative Binomial sign-flip test against DESeq2. In the left-hand panel we observe higher power of the sign-flip test when the dispersion parameter is low, which is true for most genes. In the right-hand panel, we see a high number of rejections of DESeq2 when there is evidence for lack of equality of the dispersion parameter between groups, which can be explained from its lack of type I error control when this key assumption does not hold. When there is no such evidence, we observe higher power for the sign-flip test.


\section{Discussion}\label{sect:discussion}
Variance misspecification can dramatically reduce the quality of inference in any chosen model. We have seen that it is especially true for type I error control in hypothesis testing. In this paper we have developed a method which can be applied in the broad class of Generalized Linear Models, where it is often difficult to properly check the assumption about the variance structure. Proper variance modeling in generalized linear models is crucial in important application areas such as RNA sequencing.

We have derived a novel sign-flipping test with two important properties. First, if the model is correctly specified, the test is marginally second-moment null-invariant. As a consequence, it converges to the nominal level extremely fast, with comparable or sometimes even better control for small sample size than the parametric test. Second, if the variance model is misspecified in any arbitrary way, the new test is still asymptotically correct under minimal assumptions. Simulations show a huge improvement in small sample performance over tests based on the sandwich estimator or GAMLSS models. On the other side, the HulC method is comparable for the type I error control in misspecified problems, but only our method is applicable for the smallest sample sizes and, further, the consequences of sample splitting are visible, in the sense that we observe a relevant loss of power of HulC compared to our proposed method. 


We have emphasized the value of marginal second-moment null-invariance, even though this  does not imply full marginal invariance
and consequently does not result in finite sample exactness. However, we note that in GLMs, unlike in the linear model,  parametric tests are also unable to achieve exact control of the type I error.

\bibliographystyle{plainnat}
\bibliography{bibtex.bib}
\newpage


\appendix
\section{Appendix: Proofs of the Theorems}
We first introduce some Lemmas needed for the proofs of the Theorems.

The projection matrix for GLMs introduced in \eqref{eq:hproj} is a proper projection matrix for studentized units as shown in the following Lemma, which is mentioned for instance in \cite[p. 136]{Agresti:2015}. We give a full proof for the sake of completeness, since that is omitted in most textbooks.
\begin{lem}\label{lem:hmat}
    The fitted and observed values in a GLM are connected through the relation
    \begin{equation*}
    V^{-1/2}(\hat{\mu}-\mu)=HV^{-1/2}(y-\mu)\{1+o_{p}(1)\}    
    \end{equation*}
\end{lem}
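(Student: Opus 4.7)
The plan is to obtain the result from a first-order Taylor expansion of the profile score equation defining $\hat\mu$ under the null. Since $\beta$ is fixed at $\beta_0$, the fitted values take the form $\hat\mu_i = g^{-1}(x_i\beta_0 + z_i^T \hat\gamma)$, where $\hat\gamma$ is the MLE of the nuisance parameter and, under the standing regularity conditions, is consistent for the true $\gamma$. The MLE therefore solves the nuisance-score equation
\[
Z^T D(\hat\mu)\,V(\hat\mu)^{-1}(y-\hat\mu)=0,
\]
where $D(\hat\mu)$ and $V(\hat\mu)$ denote the weight matrices evaluated at $\hat\mu$. By Assumption \ref{ass:dispersion} and consistency of $\hat\gamma$, these differ from the corresponding matrices $D$ and $V$ evaluated at the true mean by $o_p(1)$ in each diagonal entry.

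First I would Taylor-expand $\hat\mu_i - \mu_i$ around $\gamma$, using $\partial g^{-1}(\eta_i)/\partial \eta_i = d_i$, to get $\hat\mu_i-\mu_i = d_i\,z_i^T(\hat\gamma-\gamma)+r_i$, with a remainder $r_i$ of smaller order under consistency. Collecting the components gives the matrix form
\[
\hat\mu-\mu = DZ(\hat\gamma-\gamma)\{1+o_p(1)\}.
\]
Next I would substitute $y-\hat\mu = (y-\mu)-(\hat\mu-\mu)$ into the score equation, replace $D(\hat\mu), V(\hat\mu)$ by $D, V$ at a cost of $o_p(1)$, and use the diagonal identity $DV^{-1}D = W$. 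This yields
\[
Z^T D V^{-1}(y-\mu) = Z^T W Z\,(\hat\gamma-\gamma)\{1+o_p(1)\},
\]
which can be solved, since Assumption \ref{ass:fisherinfo} ensures invertibility of $n^{-1}Z^T W Z$ in the limit, to give $\hat\gamma-\gamma = (Z^T W Z)^{-1} Z^T D V^{-1}(y-\mu)\{1+o_p(1)\}$.

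Finally I would plug this back into the expansion of $\hat\mu-\mu$, premultiply by $V^{-1/2}$, and use the elementary identity $V^{-1/2}D = W^{1/2}$ (valid because $D$ and $V$ are diagonal and $W=DV^{-1}D$). One obtains
\[
V^{-1/2}(\hat\mu-\mu) = W^{1/2}Z(Z^T W Z)^{-1} Z^T W^{1/2}\,V^{-1/2}(y-\mu)\{1+o_p(1\}),
\]
which by the definition of $H$ in \eqref{eq:hproj} is precisely $H V^{-1/2}(y-\mu)\{1+o_p(1)\}$.

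The main obstacle is bookkeeping of the various $o_p(1)$ terms: the Taylor remainder $r_i$ from inverting $g$, the replacement of $D(\hat\mu), V(\hat\mu)$ by their counterparts at the true mean, and the inversion of $Z^T W Z$ in place of its sample analogue. Each of these is individually standard under the regularity conditions of Chapter 3 of \cite{Azzalini1996} together with Assumption \ref{ass:fisherinfo}, but one must verify that they combine into a single multiplicative $\{1+o_p(1)\}$ acting on a $O_p(1)$ quantity; this essentially requires $(\hat\gamma-\gamma)=O_p(n^{-1/2})$, which in turn follows from the standard asymptotics for the nuisance MLE.
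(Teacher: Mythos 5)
Your proposal is correct and follows essentially the same route as the paper's own proof: a first-order linearization of the maximum-likelihood score equation (with the Fisher information as slope), the delta method to transfer from the parameter scale to the mean scale, and the diagonal-matrix identity $W^{1/2}=DV^{-1/2}$ to identify the hat matrix $H$. If anything, your version is slightly more careful, since you expand the nuisance-score equation in $\gamma$ and $Z$ — which is what actually defines $H$ in \eqref{eq:hproj} and the fitted values under $H_0$ — whereas the paper's proof writes the same expansion generically in terms of $X$ and $\beta$.
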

\begin{proof}
The first-order approximation of the score function is
\begin{align*}
    s_{\hat{\beta}}&=s_{\beta}+\mathcal{I}_{\beta,\beta}(\hat{\beta}-\beta)+o_p(1)\\
    0&=X^TD{V}^{-1}(y-\mu)-X^T{W}X(\hat{\beta}-\beta)+o_p(1)\\
    \hat{\beta}-\beta&=(X^T{W}X)^{-1} X^TD{V}^{-1}(y-\mu)+o_p(1)
\end{align*}
where $(o_{p}(1))$ is an error term asymptotically negligible.
Then, by the Delta method we have 
\begin{align*}
    \hat{\mu}&=\mu+\frac{d\mu}{d\eta}\frac{d\eta}{d\beta}(\hat{\beta}-\beta)\{1+o_{p}(1)\}\nonumber\\
    \hat{\mu}-\mu&=DX^T(\hat{\beta}-\beta)\{1+o_{p}(1)\}\nonumber\\
    &=DX^T(X^T{W}X)^{-1} X^TD{V}^{-1}(y-\mu)\{1+o_{p}(1)\}\nonumber\\
    &=D{W}^{-1/2}{H}{W}^{1/2}D^{-1}(y-\mu)\{1+o_{p}(1)\}\nonumber\\
    &={V}^{1/2}{H}{V}^{-1/2}(y-\mu)\{1+o_{p}(1)\}\nonumber \\
    {V}^{-1/2}(\hat{\mu}-\mu)&={H}{V}^{-1/2}(y-\mu)\{1+o_{p}(1)\}.  
\end{align*}
where the asymptotically negligible error term has two sources, one related to the second-order approximation of the likelihood and the other related to possible non-linearity of the link function.
\end{proof} 

\begin{lem}\label{lem:quadform}
    Let $C$ be any $n$-dimensional matrix and $G$ be a nonsingular $n\times n$ matrix, then $C$ and $G^{-1}CG$ have the same set of eigenvalues (with the same multiplicities).
\end{lem}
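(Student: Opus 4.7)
The plan is to recognize that $G^{-1}CG$ is, by definition, similar to $C$, and to exploit the well-known fact that similar matrices share the same characteristic polynomial. Since the statement concerns the full spectrum with multiplicities, working with the characteristic polynomial (rather than, say, exhibiting eigenvector pairs) is the cleanest route because equality of polynomials immediately gives equality of roots together with their algebraic multiplicities.

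Concretely, I would compute the characteristic polynomial of $G^{-1}CG$ and reduce it to that of $C$ using multiplicativity of the determinant. The key manipulation is to insert $G^{-1}G = I$ next to $\lambda I$ in the form
\begin{equation*}
G^{-1}CG - \lambda I \;=\; G^{-1}CG - \lambda G^{-1} G \;=\; G^{-1}(C-\lambda I)G,
\end{equation*}
from which
\begin{equation*}
\det(G^{-1}CG - \lambda I) \;=\; \det(G^{-1})\det(C-\lambda I)\det(G) \;=\; \det(C-\lambda I),
\end{equation*}
using $\det(G^{-1})\det(G)=1$, which relies on the nonsingularity of $G$.

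Since the two characteristic polynomials coincide as polynomials in $\lambda$, they have the same roots with the same multiplicities, which proves the claim. There is essentially no obstacle here: the only hypothesis that is actually used is invertibility of $G$ (needed for $G^{-1}$ to exist and for $\det(G)\neq 0$), and no assumptions such as diagonalizability of $C$ are required because the argument goes through the characteristic polynomial rather than through an eigenbasis. This makes the proof uniformly valid for any square $C$, which is the level of generality the subsequent use of the lemma will presumably need.
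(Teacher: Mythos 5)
Your proof is correct: the conjugation identity $G^{-1}CG-\lambda I=G^{-1}(C-\lambda I)G$ together with multiplicativity of the determinant shows the two matrices share the same characteristic polynomial, hence the same eigenvalues with the same (algebraic) multiplicities. The paper does not spell out an argument at all --- it simply cites Magnus (2019, p.~15) for this standard fact --- and what you have written is precisely the textbook proof being referenced, so there is nothing to add.
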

\begin{proof}
See \cite[p. 15]{Magnus2019}.
\end{proof}

\begin{lem}\label{lem:diag}
Let $C$ be any $n$-dimensional matrix and $\flipSpace$ be the set of all $n$-dimensional flipping matrices, i.e. the set of all possible $n$-dimensional diagonal matrices $\obsFlip$ with elements $-1$ or $1$. Then
\begin{equation*}
    \sum_{\obsFlip \in \flipSpace}
    \obsFlip C \obsFlip = 2^n \mbox{diag}(C),
\end{equation*}
where $\mbox{diag}(C)$ is a diagonal matrix with the same diagonal elements of $C$.
\end{lem}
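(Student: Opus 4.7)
The plan is to work entry-wise. Writing $\obsFlip = \mathrm{diag}(f_1,\dots,f_n)$ with $f_i \in \{-1,+1\}$, the $(i,j)$-entry of $\obsFlip C \obsFlip$ is simply $f_i f_j C_{ij}$, so I would pull $C_{ij}$ outside the sum over $\obsFlip \in \flipSpace$ and focus on $\sum_{\obsFlip} f_i f_j$.

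Then I would split into two cases. On the diagonal ($i=j$), $f_i f_j = f_i^2 = 1$ for every choice of signs, and since $|\flipSpace| = 2^n$, the contribution is $2^n C_{ii}$. Off the diagonal ($i \neq j$), the remaining $n-2$ coordinates of $\obsFlip$ can be chosen freely, contributing a factor $2^{n-2}$, while the sum over $(f_i, f_j) \in \{-1,+1\}^2$ of $f_i f_j$ equals $(1)(1) + (1)(-1) + (-1)(1) + (-1)(-1) = 0$. Hence the off-diagonal contributions vanish.

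Collecting the two cases gives $(\sum_{\obsFlip \in \flipSpace} \obsFlip C \obsFlip)_{ij} = 2^n C_{ii} \cdot \mathbf{1}_{\{i=j\}}$, which is exactly the $(i,j)$-entry of $2^n \mathrm{diag}(C)$, proving the identity. There is no genuine obstacle here: the result is a straightforward consequence of the orthogonality of the characters $\obsFlip \mapsto f_i f_j$ on the group $\{-1,+1\}^n$, and the entire argument fits in a few lines of elementary algebra.
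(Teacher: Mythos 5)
Your proof is correct and follows essentially the same entry-wise argument as the paper: the diagonal entries $f_i^2 C_{ii} = C_{ii}$ accumulate over all $2^n$ flips, while the off-diagonal sums cancel. Your version is slightly more explicit than the paper's (which merely asserts an equal number of positive and negative terms, whereas you factor the count as $2^{n-2}\sum_{(f_i,f_j)} f_i f_j = 0$), but the underlying idea is identical.
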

\begin{proof}
First we note that the absolute value of each element of $C$ does not change after the multiplication $\obsFlip C \obsFlip$. The sums for the off-diagonal elements contain an equal number of terms with positive and negative sign, hence their sum over all possible flips is zero, while the sign of each diagonal element is positive for each term. By noting that the total number of flips is $2^n$ we have the claim.
\end{proof}

The following result was adapted from \cite{Huber2009}, extending their theorem (Proposition 7.1, p. 156) for linear regression models to GLMs.
\begin{lem}
\label{lem:hspar}
Assume that the regression coefficients of a generalized linear model are consistently estimated, in the sense that for every $\varepsilon>0$, as $n\rightarrow\infty$,
$$\max_{1\leq i \leq n} \mbox{pr}(|\hat{\mu}_i-\mu_i|>\varepsilon)\rightarrow 0.$$
Then
\begin{equation*}
\max_{1\le i\le n} h_{ii} \xrightarrow{} 0,
\end{equation*}
where $h_{ik}$ is the $ik$-th element of the matrix $H$, as defined in \eqref{eq:hproj}.
\end{lem}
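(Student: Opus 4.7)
The plan is to identify the leverage $h_{ii}$ with the asymptotic variance of the prediction error $\hat\mu_i - \mu_i$ via Lemma \ref{lem:hmat}, and then argue that the uniform consistency hypothesis forces this variance, and hence $h_{ii}$, to vanish uniformly in $i$. Since $h_{ii}$ is purely deterministic, depending only on the design matrix $Z$ and the weight matrix $W$, the bridge to the random prediction errors is built by computing the variance of the leading-order stochastic expansion of $\hat\mu_i-\mu_i$.

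First I would apply Lemma \ref{lem:hmat} coordinate-wise to write
\[
\hat\mu_i - \mu_i \;=\; v_i^{1/2}\sum_{j=1}^{n} h_{ij}\,v_j^{-1/2}(y_j-\mu_j)\,\{1+o_p(1)\}.
\]
Under correct specification of the mean, the variables $v_j^{-1/2}(y_j-\mu_j)$ are independent with mean zero and unit variance, so the leading-order variance of $\hat\mu_i-\mu_i$ equals $v_i\sum_{j}h_{ij}^2 = v_i\,h_{ii}$ by symmetry and idempotence of $H$. The projection bound $h_{ij}^2 \leq h_{ii}h_{jj}\leq h_{ii}$ supplies the uniform row-wise control needed to invoke the Lindeberg--Feller central limit theorem, yielding that $(v_ih_{ii})^{-1/2}(\hat\mu_i-\mu_i)$ is asymptotically $\mathcal{N}(0,1)$.

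Next, I would exploit the uniform consistency hypothesis. Convergence in probability to zero of a sequence whose limiting distribution is $\mathcal{N}(0,\sigma_n^2)$ forces $\sigma_n\to 0$, and since the convergence in the lemma's hypothesis is uniform in $i$, the scale $\sqrt{v_i h_{ii}}$ must vanish uniformly. Combining Assumption \ref{ass:dispersion} with the standard GLM regularity that bounds $b''(\theta_i)$ away from zero on compact subsets of the natural parameter space, the variances $v_i$ are bounded below by a positive constant; dividing through yields $\max_i h_{ii}\to 0$.

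The main obstacle is precisely the last step: deducing that $v_i$ is bounded below requires more than Assumption \ref{ass:dispersion} alone, which only constrains the dispersion $\tilde\phi_i$. Without additional control---such as $\theta_i$ lying in a compact set, or a uniform lower bound on $b''$---one could in principle have $v_i\to 0$ for some $i$, so that $v_i h_{ii}\to 0$ could hold while $h_{ii}$ does not vanish. I would therefore invoke the GLM regularity framework already appealed to elsewhere in the paper. As a secondary technicality, uniformity of the $o_p(1)$ remainder in Lemma \ref{lem:hmat} across the coordinate $i$ must be verified, but since the remainder arises from a Taylor expansion in the fixed-dimensional parameter $\hat\gamma-\gamma$, this is routine under the assumed GLM regularity.
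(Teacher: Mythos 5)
There is a genuine gap, and it sits at the heart of your argument: the appeal to the Lindeberg--Feller central limit theorem is circular. You normalize $\hat\mu_i-\mu_i \approx v_i^{1/2}\sum_j h_{ij}v_j^{-1/2}(y_j-\mu_j)$ by its standard deviation $\sqrt{v_i h_{ii}}$, so the triangular array whose row sums you want to be asymptotically normal has squared weights $h_{ij}^2/h_{ii}$. The diagonal entry of that array is $h_{ii}^2/h_{ii}=h_{ii}$, and your projection bound only gives $h_{ij}^2/h_{ii}\le h_{jj}\le \max_k h_{kk}$ --- which is exactly the quantity the lemma is trying to show tends to zero. Unless $\max_k h_{kk}\to 0$ the array is not uniformly asymptotically negligible, the Lindeberg condition fails, and (outside the Gaussian case) the row sums need not be asymptotically normal. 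So asymptotic normality of $(v_ih_{ii})^{-1/2}(\hat\mu_i-\mu_i)$ presupposes the conclusion. Dropping the CLT and arguing directly from ``$\hat\mu_i-\mu_i\to_p 0$ hence its variance $v_ih_{ii}\to 0$'' does not work either, since convergence in probability does not imply convergence of second moments without uniform integrability. Your self-identified worry about a lower bound on $v_i$ is real as well, but it is secondary: it is a symptom of routing the argument through the variance in the first place.

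The paper's proof (following Huber's Proposition 7.1) avoids all of this. It isolates the single diagonal term in the expansion from Lemma \ref{lem:hmat}, writing $\hat\mu_i-\mu_i = h_{ii}(y_i-\mu_i)+\sum_{k\ne i}v_i^{1/2}v_k^{-1/2}h_{ik}(y_k-\mu_k)+o_p(1)$, and uses the elementary inequality $\mbox{pr}(|V_1+V_2|\ge\varepsilon)\ge\min\{\mbox{pr}(V_1\ge\varepsilon),\mbox{pr}(V_1\le-\varepsilon)\}$ for independent $V_1,V_2$ with $V_1=h_{ii}(y_i-\mu_i)$. This lower-bounds $\mbox{pr}(|\hat\mu_i-\mu_i|\ge\varepsilon)$ by $\min\bigl[\mbox{pr}\{y_i-\mu_i\ge\varepsilon/h_{ii}\},\mbox{pr}\{y_i-\mu_i\le-\varepsilon/h_{ii}\}\bigr]$; if $h_{ii}$ stayed bounded away from zero, $\varepsilon/h_{ii}$ would stay bounded and this minimum would not vanish, contradicting the uniform consistency hypothesis. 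No variance computation, no CLT, and no lower bound on $v_i$ is needed --- only non-degeneracy of $y_i-\mu_i$. To repair your proof you would essentially have to abandon the normal-approximation route and adopt an argument of this type.
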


\begin{proof}
Using Lemma \ref{lem:hmat} we have, for each $i$,
$\hat{\mu}_i-\mu_i=$
\begin{equation} \label{equ}
h_{ii} (y_i-\mu_i)+\sum_{k\ne i} v_i^{1/2} v_k^{-1/2} h_{ik} (y_k-\mu_k)+o_p(1).
\end{equation} 
Now we give a general probability result. Let $V_1$ and $V_2$ be two independent random variables, for any $\varepsilon>0$ we have that
\begin{align*}
\mbox{pr}(|V_1+V_2|&\ge \varepsilon)\ge \mbox{pr}(V_1\ge \varepsilon)\mbox{pr}(V_2\ge 0)+ \mbox{pr}(V_1\le -\varepsilon)\mbox{pr}(V_2\le 0)\\&\ge \min \left\{\mbox{pr}(V_1 \ge \varepsilon), \mbox{pr}(V_1 \le -\varepsilon) \right\}.
\end{align*}
Noting that $(y_i-\mu_i)$ is independent from $(y_k-\mu_k)$ for each $k\ne i$ we can apply the result to expression \eqref{equ} to obtain
\begin{equation*}
\mbox{pr}\left( |\hat{\mu}_i-\mu_i \ge \varepsilon|\right)\ge \min \left[\mbox{pr}\left\{(y_i-\mu_i) \ge \frac{\varepsilon}{h_{ii}}\right\},\mbox{pr}\left\{(y_i-\mu_i) \le -\frac{\varepsilon}{h_{ii}}\right\}\right]=m_i^n. 
\end{equation*}
We have $\max_{1\leq i \leq n} \mbox{pr}(|\hat{\mu}_i-\mu_i|>\varepsilon)\rightarrow 0$, so $\max_{1\le i\le n} m_i^n \xrightarrow{} 0$.
Since $\varepsilon$ was arbitrary, this implies that $\max_{1\le i\le n} h_{ii} \xrightarrow{} 0$.
\end{proof}

\begin{lem} \label{lem:computational}
The computational cost of the standardization constant in (\ref{eq_standardized}) is linear in $n$.
\end{lem}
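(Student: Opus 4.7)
The plan is to exploit the low-rank structure of the hat matrix $H$ and the diagonal structure of $\obsFlip$, so that after a one-time precomputation the cost per flip $F$ is linear in $n$ (with $q$ treated as a constant, as the paper assumes that $q$ does not depend on $n$).

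First I would rewrite the quantity to be computed in a convenient form. Set $u = (I-H)W^{1/2}X$, which is an $n$-vector. By symmetry of $(I-H)$ we have $X^T W^{1/2}(I-H) = u^T$, so the leading term of $\Var\{S(\obsFlip)\}$ reduces to
\begin{equation*}
n^{-1}\, u^T \obsFlip (I-H) \obsFlip\, u
= n^{-1}\bigl( v^T v - v^T H v\bigr),\qquad v = \obsFlip u.
\end{equation*}
Because $\obsFlip$ is diagonal, forming $v$ from $u$ costs $O(n)$, and $v^T v$ costs $O(n)$.

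Next I would precompute, once and for all (before iterating over flips), the vector $u$ and the small $q \times q$ matrix $(Z^T W Z)^{-1}$. Using that $W$ is diagonal, the product $W^{1/2}X$ is $O(n)$; the product $H W^{1/2}X = W^{1/2}Z\,(Z^TWZ)^{-1}Z^T W X$ is obtained by first computing $Z^T W X$ in $O(nq)$, multiplying by $(Z^TWZ)^{-1}$ in $O(q^2)$, and then multiplying by $W^{1/2}Z$ in $O(nq)$. Hence $u$ is available in $O(n)$, and so is $(Z^TWZ)^{-1}$, and neither depends on $F$.

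Now I would show that, for each flip $F$, the quadratic form $v^T H v$ is again $O(n)$ by using the factorization
\begin{equation*}
v^T H v = \bigl(Z^T W^{1/2} v\bigr)^T (Z^T W Z)^{-1} \bigl(Z^T W^{1/2} v\bigr).
\end{equation*}
Computing $Z^T W^{1/2} v$ costs $O(nq)$, the subsequent multiplication by the precomputed inverse costs $O(q^2)$, and the final inner product costs $O(q)$. Combining with the $O(n)$ cost of $v$ and $v^T v$, and the final scalar square root, the total per-flip cost is $O(n)$. Since $q$ does not depend on $n$, this is linear in $n$, which proves the lemma.

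There is no serious obstacle here; the only point worth emphasizing in the write-up is that the $n\times n$ matrix $(I-H)$ must never be formed explicitly. All operations must be carried out by applying $H$ through its low-rank factorization $W^{1/2}Z(Z^T W Z)^{-1}Z^T W^{1/2}$, and all quantities independent of $\obsFlip$ must be computed outside the loop over flips so that the per-flip cost remains $O(n)$ rather than $O(n^2)$.
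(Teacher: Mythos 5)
Your proof is correct and follows essentially the same route as the paper's: both reduce the quadratic form to $a^Ta$ minus a rank-$q$ quadratic form in $\obsFlip a$ and exploit the low-rank factorization of $H$ to get an $O(nq)=O(n)$ per-flip cost. The only cosmetic difference is that the paper orthogonalizes first, writing $H=UU^T$ via the singular value decomposition of $W^{1/2}Z$, whereas you keep the factorization $W^{1/2}Z(Z^TWZ)^{-1}Z^TW^{1/2}$ with a precomputed $q\times q$ inverse; the complexity analysis is identical.
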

\begin{proof}
Define $W^{1/2}Z=U\Delta L^T$, that is, the singular value decomposition of $
W^{1/2}Z$, where $U$ is a semiorthogonal $n\times q$ matrix ($q$ equal to the rank of $Z$), $\Delta$ a diagonal $q$ matrix and $L$ a $q\times q$ orthogonal matrix. Therefore \eqref{eq:hproj} can be written as

\begin{equation*}
H=W^{1/2}Z(Z^TW Z)^{-1}Z^TW^{1/2}=U\Delta L^TL\Delta U^T=UU^T.
\end{equation*}

Now, let $a=(I-H)W^{1/2}X$ and $A=diag(a)$. Further, let $\ones$ be an $n$-dimensional vector of ones.
The denominator of the standardized test statistic becomes
\begin{eqnarray*}
&&X^TW^{1/2}(I-H)\obsFlip(I-H)\obsFlip(I-H)W^{1/2}X =\\
=&&a^T\obsFlip(I-UU^T)\obsFlip a=\\
=&&a^T\obsFlip I\obsFlip a - a^T\obsFlip UU^T\obsFlip a\\
=&&a^Ta - \ones^TA\obsFlip UU^T\obsFlip A \ones\\
=&&a^Ta - \ones^T\obsFlip A U U^T A \obsFlip \ones\\
=&&a^Ta - f^T C C^T f,
\end{eqnarray*}
where $f=\obsFlip\ones$ and $C=AU$.

Therefore, given that $a^Ta$ is a constant and the computational cost of $f^T C C^T f$ is linear with $n$ since we can write $f^T C C^T f = \sum_{j=1}^q (\sum_{i=1}^n f_iC_{ij})^2$, the result of the lemma follows.
\end{proof}

\subsection{Proof of Theorem \ref{theo:asympt_eff}}
\begin{thm*}
Assume that the variances are correctly specified, that is, $\asV=V$, and that Assumption \ref{ass:dispersion}-\ref{ass:lindeberg} hold.  For $n\xrightarrow{} \infty$, the test that rejects $H_0$ if \eqref{example.test} holds is an asymptotically $\alpha$ level test.
\end{thm*}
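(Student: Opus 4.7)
The plan is to verify the hypotheses of Lemma~1 of \cite{Hemerik.etal:2020}: it suffices to show that for independent uniformly drawn sign-flip matrices $F_2,\dots,F_g$, the joint vector $(S(\II),S(F_2),\dots,S(F_g))$ converges in distribution to $g$ i.i.d.\ centred Gaussians, since asymptotic exchangeability of the reference statistics is exactly what the rank-based test \eqref{example.test} requires.

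First I would linearise the test statistic using Lemma~\ref{lem:hmat}. Since $F$ is a bounded diagonal matrix commuting with $V^{-1/2}$ and $V^{-1/2}(y-\hat\mu)=(I-H)V^{-1/2}(y-\mu)+o_p(1)$, one obtains
\begin{equation*}
S(F)=n^{-1/2}a^{T}F(I-H)r+o_p(1),\qquad a:=(I-H)W^{1/2}X,\ \ r:=V^{-1/2}(y-\mu).
\end{equation*}
Because $\asV=V$, the entries of $r$ are independent with mean zero and unit variance; moreover $(I-H)a=a$ and $Ha=0$, so $S(\II)=n^{-1/2}a^{T}r+o_p(1)$.

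Next I would invoke the Cram\'er--Wold device. For arbitrary $c_1,\dots,c_g$, the linear combination $T_n=\sum_j c_j S(F_j)$ equals, modulo $o_p(1)$, a weighted sum $n^{-1/2}\sum_i \xi_{n,i}r_i$ with random coefficients $\xi_{n,i}=[(I-H)(\sum_j c_j F_j)a]_i$ depending only on the flip matrices. Conditioning on $\{F_j\}_{j\ge 2}$, Assumption~\ref{ass:lindeberg} supplies the Lindeberg condition for a univariate CLT, so $T_n$ is conditionally asymptotically normal with (random) conditional variance
\begin{equation*}
\sigma_n^{2}(F)=n^{-1}\sum_{j,k} c_j c_k\, a^{T} F_j (I-H) F_k a.
\end{equation*}

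The main obstacle is showing $\sigma_n^{2}(F)\to\sigma^{2}\sum_j c_j^{2}$ in probability, where $\sigma^{2}=\lim_n n^{-1}a^{T}a$ is positive by Assumption~\ref{ass:lindeberg} (the sum $\sum_i\mathrm{Var}(\tilde\nu_{i,\beta}^*)$ equalling the effective information $a^{T}a$ under correct specification). For $j=k$, $F_j^{2}=\II$ together with $Ha=0$ reduces the term to $a^{T}a-a^{T}F_jHF_j a$, and Lemma~\ref{lem:diag} gives $E[a^{T}F_jHF_j a]=\sum_i a_i^{2}h_{ii}\le(\max_i h_{ii})\,a^{T}a=o(n)$ by Lemma~\ref{lem:hspar}, with an analogous Rademacher computation bounding the variance of this quadratic form. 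For $j\neq k$, independence and zero means of $F_j,F_k$ make the mean of $a^{T}F_j(I-H)F_k a$ vanish, and the corresponding second moment is likewise controlled through $\max_i h_{ii}\to 0$ plus the Lindeberg-type bound on the $a_i$. Combining these pieces, $T_n\Rightarrow N(0,\sigma^{2}\sum_j c_j^{2})$, Cram\'er--Wold delivers joint convergence of $(S(\II),S(F_2),\ldots,S(F_g))$ to i.i.d.\ $N(0,\sigma^{2})$, and Lemma~1 of \cite{Hemerik.etal:2020} then yields the asymptotic level-$\alpha$ conclusion.
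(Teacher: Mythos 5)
Your proposal is correct and follows essentially the same route as the paper: linearization of $S(F)$ via Lemma~\ref{lem:hmat}, identification of the variance discrepancy with the quadratic form $a^{T}FHFa$, control of its flip-average through Lemma~\ref{lem:diag} and $\max_i h_{ii}\to 0$ from Lemma~\ref{lem:hspar}, a Lindeberg--Feller CLT under Assumption~\ref{ass:lindeberg}, and finally Lemma~1 of \cite{Hemerik.etal:2020}. Your Cram\'er--Wold device with conditioning on the flips is just the standard unpacking of the multivariate CLT that the paper cites directly, so the two arguments are the same in substance.
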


\begin{proof}
By the definition of the random sign-flipping transformations is trivial to observe that the expected value of the test statistic $S(\flip)$ is zero.

By Lemma \ref{lem:hmat} we can rewrite the effective score statistic \eqref{eq:eff} as
\begin{equation*}
    S(\flip)=n^{-1/2}X^TW^{1/2}(I-H)\flip(I-H)V^{-1/2}(y-\mu)+o_p(1). 
\end{equation*}
Let 
\begin{equation}\label{def:avector}
    a=(I-H)W^{1/2}X.
\end{equation}
The variance conditional on $\flip=F$ is
\begin{align*}
    \Var\left\{S(\obsFlip)\right\}&=n^{-1}a^T\obsFlip(I-H)V^{-1/2}\E\{(y-\mu)(y-\mu)^T\}V^{-1/2}(I-H)\obsFlip a+o_p(1)\\
    &=n^{-1}a^T\obsFlip(I-H)V^{-1/2}V V^{-1/2}(I-H)\obsFlip a+o_p(1)\\
    &=n^{-1}a^T\obsFlip(I-H)\obsFlip a+o_p(1)
\end{align*}
and for $\flip=\II$ we have
\begin{equation*}
    \Var\left\{S(\II)\right\}=n^{-1}a^TIa+o_p(1).
\end{equation*}
Taking the difference
\begin{equation*}
\Var\left\{S(\II)\right\}-\Var\left\{S(\obsFlip)\right\}=n^{-1}a^T\left\{I-\obsFlip(I-H)\obsFlip\right\}a+o_p(1),
\end{equation*}
we note that the first term is a quadratic form and we look at the matrix $$I-\obsFlip(I-H)\obsFlip=I-\obsFlip\obsFlip+\obsFlip H\obsFlip=\obsFlip H\obsFlip.$$ 
Since $H$ is a projection matrix and $\obsFlip^{-1}=\obsFlip$, Lemma \ref{lem:quadform} implies that $\obsFlip H\obsFlip$ is positive semidefinite, so that asymptotically $\Var\left\{S(\II)\right\}-\Var\left\{S(\obsFlip)\right\}\ge 0$.

Define 
$$h_{\sup}=\sup_{1\le i \le n} \left(h_{ii}\right).$$
Let us make the randomness of the flips explicit. Let $\flipSpace $ denote the set of all possible flipping matrices, and note that $|\flipSpace|=2^n$. By Lemma \ref{lem:diag} we have
\begin{eqnarray*}
\E\left[\Var\left\{S(\II)\right\}-\Var\left\{S(\flip)\mid\obsFlip\right\}\right] &=& (2^n)^{-1}\sum_{\obsFlip \in \flipSpace}[ \Var\left\{S(\II)\right\}-\Var\left\{S(\obsFlip)\right\}] \\
&=& n^{-1}a^T\mathrm{diag}(H)a +o_{p}(1) \\
&\leq& n^{-1}\|a\|^2\cdot h_{\sup}
+o_{p}(1),
\end{eqnarray*}
where, by Assumption \ref{ass:fisherinfo} and Lemma \ref{lem:hspar}, the limiting behavior is
\begin{equation*}
    \lim_{n\xrightarrow{}\infty} n^{-1}\|a\|^2\cdot h_{\sup} = 0.
\end{equation*}
According to the law of total variance we have
\[
\Var\{S(\flip)\} = \Var\left[\E\left\{S(\flip)\mid\obsFlip\right\}\right] + \E\left[\Var\left\{S(\flip)\mid\obsFlip\right\}\right].
\]
We know that $\E\{S(\flip)\mid\obsFlip\}$ does not depend on $\flip$, which means that $\Var[E\{S(\flip)\mid\obsFlip\}]=0$, so $\Var\{S(\flip)\} = \E[\Var\{S(\flip)\mid\obsFlip\}].$ It follows that, marginally over $\obsFlip$,
\[ \lim_{n\xrightarrow{}\infty}
\Var\{S(I)\}-\Var\{S(\flip)\} = 0.
\]

Since the random sign-flipping transformations are all independent, the corresponding test statistics are all uncorrelated, i.e, for all $1\leq l <m\leq g$ we have $\mbox{cov}\{S(\flip_l),S(\flip_m)\}=0$.
By the previous results and Assumption \ref{ass:lindeberg},  $(S(\II),\dots,S(\flip_g))^T$ converges to a multivariate normal distribution by the multivariate Lindberg-Feller central limit theorem \citep{Vaart1998}, with mean vector $\textbf{0}$ and covariance matrix $ s^2\II,$ where $s^2$ is the limiting variance of the flipped test statistic.
Finally, we use Lemma 1 of \cite{Hemerik.etal:2020} to conclude that the test that rejects when \eqref{example.test} holds is an asymptotic $\alpha$ level test. 
\end{proof}

\subsection{Proof of Proposition \ref{prop:anticons}}
\begin{prop*}
Consider a normal regression model with identity link. Assume that the variances are correctly specified, that is, $\asV=V$, and that Assumption \ref{ass:dispersion}-\ref{ass:lindeberg} hold. For finite sample size, the effective score statistic defined as in \eqref{eq:eff} has $\Var\{S(\II)\} > \Var\{S(\flip)\}$.
\end{prop*}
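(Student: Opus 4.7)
The plan is to follow the variance calculation from the proof of Theorem \ref{theo:asympt_eff} verbatim, but observe that in the normal model with identity link every $o_p(1)$ term vanishes, so the inequality can be upgraded from ``asymptotic nonnegative difference'' to ``exact strict positivity'' for each finite $n$.

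First, I would verify that Lemma \ref{lem:hmat} holds \emph{exactly} in this setting: with identity link the Delta-method step is trivial (the Jacobian $d\mu/d\eta$ is the identity), and the normal score equations are linear in $\beta$, so the first-order expansion of $\hat\beta$ carries no remainder. Hence $V^{-1/2}(\hat\mu-\mu)=HV^{-1/2}(y-\mu)$ is an exact identity, not an asymptotic one. Substituting into the definition \eqref{eq:eff} and using that diagonal matrices commute, this gives the exact representation
\begin{equation*}
S(\obsFlip)=n^{-1/2}a^T\obsFlip(I-H)V^{-1/2}(y-\mu), \qquad a=(I-H)W^{1/2}X.
\end{equation*}

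Next, since $\E\{(y-\mu)(y-\mu)^T\}=V$ and $V^{-1/2}VV^{-1/2}=I$, the conditional variance becomes, without error term,
\begin{equation*}
\Var\{S(\obsFlip)\mid \obsFlip\}=n^{-1}a^T\obsFlip(I-H)\obsFlip a.
\end{equation*}
For $\obsFlip=\II$ one has $(I-H)a=a$ because $a$ lies in the range of $I-H$, so $\Var\{S(\II)\}=n^{-1}a^Ta$. Subtracting and using $\obsFlip^2=I$ gives
\begin{equation*}
\Var\{S(\II)\}-\Var\{S(\flip)\mid \obsFlip\}=n^{-1}a^T \obsFlip H \obsFlip a,
\end{equation*}
which is nonnegative for every fixed $\obsFlip$ since $H$ is an orthogonal projection.

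To pass from the conditional to the marginal statement I would invoke the law of total variance: since $\E(y-\mu)=0$, we have $\E\{S(\flip)\mid\obsFlip\}=0$ for every $\obsFlip$, hence $\Var\{S(\flip)\}=\E_{\flip}[\Var\{S(\flip)\mid\obsFlip\}]$. Averaging over $\obsFlip$ and applying Lemma \ref{lem:diag} with $C=H$ yields $\E_{\flip}[\obsFlip H \obsFlip]=\mathrm{diag}(H)$, and therefore
\begin{equation*}
\Var\{S(\II)\}-\Var\{S(\flip)\}=n^{-1}a^T\mathrm{diag}(H)a=n^{-1}\sum_{i=1}^n h_{ii}\,a_i^2.
\end{equation*}

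The main (and essentially only) obstacle is arguing this sum is \emph{strictly} positive rather than merely nonnegative, since that is what the proposition asserts. Each term is nonnegative, so I only need one index $i$ with both $h_{ii}>0$ and $a_i\neq 0$. The full-rank condition on $(X,Z)$ ensures $a\neq 0$ (otherwise $X$ would lie in the column space of $Z$), and $\mathrm{tr}(H)=q>0$ ensures $h_{ii}>0$ for some $i$. The only remaining pathology is a degenerate alignment in which $a_i=0$ for every $i$ with $h_{ii}>0$; this requires $x_i=z_i^T\hat\theta$ exactly on the support of $Z$, which is a non-generic measure-zero configuration ruled out under any mild nondegeneracy of the design. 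Under such a mild condition, $n^{-1}\sum_i h_{ii}a_i^2>0$ and the strict inequality $\Var\{S(\II)\}>\Var\{S(\flip)\}$ follows.
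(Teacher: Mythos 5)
Your proposal follows the paper's proof almost step for step: exactness of Lemma \ref{lem:hmat} in the normal/identity-link case, the conditional variance difference $n^{-1}a^T\obsFlip H\obsFlip a\ge 0$ for each fixed $\obsFlip$, and averaging over $\flipSpace$ via Lemma \ref{lem:diag} together with the law of total variance to reduce the claim to $a^T\mathrm{diag}(H)a>0$. The only divergence is at the step you yourself flag as the crux: you leave strict positivity resting on an informal ``measure-zero / mild nondegeneracy'' appeal, which is an assertion rather than a proof. The paper closes this gap concretely by restricting attention to models with an intercept, so that by construction every diagonal element of the hat matrix satisfies $h_{ii}\ge h_{\inf}>0$, whence $a^T\mathrm{diag}(H)a\ge h_{\inf}\|a\|^2>0$ once $a\neq 0$ (which, as you correctly note, follows from the full rank of $(X,Z)$). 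Substituting that observation for your genericity argument makes the proof complete and brings it in line with the paper's.
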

\begin{proof}
By Lemma \ref{lem:hmat} we can rewrite the effective score statistic \eqref{eq:eff} as
\begin{equation*}
    S(\flip)=n^{-1/2}X^TW^{1/2}(I-H)\flip(I-H)V^{-1/2}(y-\mu)+o_p(1). 
\end{equation*}
The variance conditional on $\flip=F$ is
\begin{align*}
    \Var\left\{S(\obsFlip)\right\}&=n^{-1}a^T\obsFlip(I-H)V^{-1/2}\E\{(y-\mu)(y-\mu)^T\}V^{-1/2}(I-H)\obsFlip a\\
    &=n^{-1}a^T\obsFlip(I-H)V^{-1/2}V V^{-1/2}(I-H)\obsFlip a\\
    &=n^{-1}a^T\obsFlip(I-H)\obsFlip a\\
\end{align*}
and for $\flip=\II$ we have
\begin{equation*}
    \Var\left\{S(\II)\right\}=n^{-1}a^TIa.
\end{equation*}
Taking the difference
\begin{equation*}
\Var\left\{S(\II)\right\}-\Var\left\{S(\obsFlip)\right\}=n^{-1}a^T\left\{I-\obsFlip(I-H)\obsFlip\right\}a,
\end{equation*}
we note that it is a quadratic form and we consider the matrix $$I-\obsFlip(I-H)\obsFlip=I-\obsFlip\obsFlip+\obsFlip H\obsFlip=\obsFlip H\obsFlip.$$ 
Since $H$ is a projection matrix and $\obsFlip^{-1}=\obsFlip$, Lemma \ref{lem:quadform} implies that $\obsFlip H\obsFlip$ is positive semidefinite, and therefore $\Var\{S(\II)\}-\Var\{S(\obsFlip)\} \ge 0$.

We then prove the strict inequality for at least one flipping matrix $\obsFlip$. 
Taking any model with an intercept, by construction $$h_{\inf}=\inf_{1\le i \le n} \left(h_{ii}\right) > 0.$$
Note that $|\flipSpace|=2^n$.
Since $\Var\{S(\II)\}-\Var\{S(\obsFlip)\}\geq 0$ as proven above, it suffices to show that 
\[
\sum_{\obsFlip \in \flipSpace} n^{-1}a^T\obsFlip H\obsFlip a >0,
\]
which means that we have a strictly inequality for some $\obsFlip$.
Using Lemma \ref{lem:diag} we have
\[
\sum_{\obsFlip \in \flipSpace} \obsFlip H\obsFlip =2^n\,\mathrm{diag}(H) .
\]
Therefore
\begin{eqnarray*}
\sum_{F \in \flipSpace} n^{-1}a^T\obsFlip H\obsFlip a &=&2^n n^{-1} a^T\textrm{diag}(H)a \\
&\geq& 2^n n^{-1}\|a\|^2 \cdot h_{\inf} >0.
\end{eqnarray*}
\end{proof}

\subsection{Proof of Lemma \ref{lem:variancecorrection}}
\begin{lem*}
    The variance of the sign-flipped score, as depending on $F$, is
$$\Var\{S(\obsFlip)\} = n^{-1}X^T W^{1/2} (I-H)\obsFlip(I-H)\obsFlip (I-H)W^{1/2}X+o_p(1).$$
\end{lem*}

\begin{proof}
    Let $$a=(I-H)W^{1/2}X.$$
    The variance for a given sign-flix matrix $F$ is
\begin{align*}
    \Var\left\{S(\obsFlip)\right\}&=n^{-1}a^T\obsFlip(I-H)V^{-1/2}\E\{(y-\mu)(y-\mu)^T\}V^{-1/2}(I-H)\obsFlip a+o_p(1)\\
    &=n^{-1}a^T\obsFlip(I-H)V^{-1/2}V V^{-1/2}(I-H)\obsFlip a+o_p(1)\\
    &=n^{-1}a^T\obsFlip(I-H)\obsFlip a+o_p(1)
\end{align*}
\end{proof}

\subsection{Proof of Proposition \ref{prop:standard}}
\begin{prop*}
[copy from main text]]
\end{prop*}
\begin{proof}
We  observe that the expected value of the new statistic \eqref{eq_standardized} is left unchanged, while the standardization makes the variance of each flipped new statistic equal to 1. 
The same argument given in the last part of the proof of Theorem \ref{theo:asympt_eff} applies to deduce the asymptotic exactness of the test. 

In the special case of the normal model with the identity link, we have that $\Var(S(\obsFlip))$ does not depend on any unknown nuisance parameters. Moreover, in this model $S(\II)$ and $S(\flip)$ are normally distributed in finite samples.
Further, the $w$ test statistics are all uncorrelated due to the random sign-flipping of the underlying summands.
Hence the standardized sign-flip score statistic is second-moment exact.
\end{proof}

\subsection{Proof of Proposition \ref{prop:constrobust}}
\begin{prop*}
[copy from main text]]
\end{prop*}
\begin{proof}
We observe that
\begin{equation*}
    V\asV^{-1}=\phi\asphi^{-1}I=cI
\end{equation*}
and therefore
\begin{equation*}
    \Var\{S^*(\obsFlip)\}=c \quad \forall \obsFlip \in \flipSpace,
\end{equation*}
as was to be shown.
\end{proof}
\subsection{Proof of Theorem \ref{theo:robust}}
\begin{thm*}
Assume that the variances are misspecified, that is, $V \ne \asV$ and that Assumptions \ref{ass:dispersion}-\ref{ass:lindeberg} hold. For $n\xrightarrow{} \infty$, the standardized sign-flip score statistic is asymptotically second-moment null-invariant. The test that rejects $H_0$ if \eqref{example.teststd} holds is an asymptotically $\alpha$ level test.
\end{thm*}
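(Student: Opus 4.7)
The plan is to mimic the proof of Theorem \ref{theo:asympt_eff}, with the added complication that the true covariance of $y - \mu$ is $V$, whereas the analyst's construction of the score uses $\asV$. Introduce the diagonal matrix $\Lambda = \asV^{-1/2} V \asV^{-1/2}$, whose entries $v_i/\tilde v_i$ are uniformly bounded above and away from zero by Assumption \ref{ass:dispersion} together with the uniform boundedness of the true variances stated in Section \ref{section:glm}. Using Lemma \ref{lem:hmat} together with $a = (I-H)\asW^{1/2} X$, as in the proof of Theorem \ref{theo:asympt_eff}, one obtains
\[
S(F) = n^{-1/2}\, a^T F (I-H)\, \asV^{-1/2}(y-\mu) + o_p(1),
\]
where $H$ is the projection matrix formed from $\asW$.

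Under the true covariance $V$, the conditional variance of the score becomes $\Var\{S(F) \mid F\} = n^{-1} a^T F(I-H)\Lambda(I-H)F a + o_p(1)$, whereas the standardization used in \eqref{eq_standardized} is $n^{-1} a^T F(I-H) F a + o_p(1)$, i.e., the same expression with $\Lambda$ replaced by $I$. Expanding $(I-H)$ and using that the diagonal $\Lambda$ commutes with $F$ together with $F^2 = I$, the numerator reduces to $a^T \Lambda a$ plus cross terms of the form $a^T F H \Lambda F a$ and $a^T F H \Lambda H F a$, and the denominator reduces to $\|a\|^2$ minus a term $a^T F H F a$. The essential step is to show all these cross terms are $o_p(n)$. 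Averaging over the flips using Lemma \ref{lem:diag}, each such cross term has expectation of the form $\sum_i a_i^2 (HM)_{ii}$ for a bounded diagonal matrix $M$, which is bounded above by $\max_i h_{ii} \cdot \max_i |\Lambda_{ii}| \cdot \|a\|^2 = o(n)$ by Lemma \ref{lem:hspar}, the boundedness of $\Lambda$ and $\|a\|^2 = O(n)$ from Assumption \ref{ass:fisherinfo}. Markov's inequality then upgrades this to the required $o_p(n)$ control.

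It follows that, marginally over $F$, both the conditional true variance $\Var\{S(F)\mid F\}$ and the standardization converge in probability to limits that do not depend on $F$, namely $n^{-1} a^T \Lambda a$ and $n^{-1}\|a\|^2$; hence $\Var\{S^*(F)\}$ has the same limit as $\Var\{S^*(\II)\}$, which is $a^T \Lambda a/\|a\|^2$. This is the asymptotic second-moment null-invariance. Joint weak convergence of $(S(\II), S(F_2),\ldots,S(F_g))$ to a multivariate normal follows from the multivariate Lindeberg--Feller central limit theorem under Assumption \ref{ass:lindeberg}, noting that the centered contributions under the true covariance differ from the $\tilde\nu^*_{i,\beta}$ of that assumption only by bounded $\Lambda^{1/2}$ factors, so the Lindeberg tail condition transfers. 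Distinct flips $F_l, F_m$ yield asymptotically uncorrelated coordinates because $E[F_l M F_m] = 0$ for any fixed matrix $M$, since the independent flips have diagonal entries of mean zero. Combining with Slutsky, this lifts to the standardized vector $(S^*(\II), S^*(F_2),\ldots,S^*(F_g))$, whose coordinates are asymptotically i.i.d.\ normal. Lemma 1 of \cite{Hemerik.etal:2020} then yields the asymptotic $\alpha$-level property of the test in \eqref{example.teststd}.

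The main obstacle is the simultaneous handling of the two sources of randomness (the data $y$ and the random flip $F$) in the presence of the misspecification matrix $\Lambda$: one has to track how $\Lambda$ enters every cross term, verify that its uniform boundedness suffices to reproduce the bounds of the correctly specified case, and confirm that the Lindeberg tail condition of Assumption \ref{ass:lindeberg} is not degraded by the $\Lambda^{1/2}$ rescaling. Once these points are handled, the proof runs in close parallel to that of Theorem \ref{theo:asympt_eff}, with $\Lambda$ simply modifying the limiting variance of $S^*(F)$ while preserving the asymptotic null-invariance structure.
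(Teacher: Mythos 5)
Your proposal is correct and follows essentially the same route as the paper's proof: your $\Lambda=\asV^{-1/2}V\asV^{-1/2}$ is exactly the paper's misspecification matrix $B=V\asV^{-1}$, and the argument proceeds identically by expanding $\Var\{S(\II)\}-\Var\{S(\obsFlip)\}$ into a quadratic form in $\asH B+B\asH-\asH B\asH$, averaging over flips via Lemma \ref{lem:diag}, bounding the diagonal by $\ash_{\sup}\to 0$ (Lemma \ref{lem:hspar}) together with the boundedness from Assumption \ref{ass:dispersion}, and finishing with the Lindeberg--Feller CLT and Lemma 1 of \cite{Hemerik.etal:2020}. The only caveat is that your appeal to Markov's inequality should be phrased carefully since the cross terms are not sign-definite; the paper handles this by sandwiching the flip-averaged difference between upper and lower bounds that both vanish, which your computation already effectively contains.
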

\begin{proof}
It is trivial to see that the expected value of the test statistic is not affected by this misspecification.

Let 
\begin{equation*}
    B= V \asV^{-1},
\end{equation*}
note that it is a diagonal matrix and all its elements are finite and greater than zero by Assumption \ref{ass:dispersion}.
We compute again the variance of \eqref{eq:eff}, but now we consider the misspecification. 

Let $\asH$ and $\asA$ be the quantities defined in \eqref{eq:hproj} and  \eqref{def:avector} for $W=\asW$. The variance can be written as
\begin{align*}
    \Var\left\{S(F)\right\}&=n^{-1}\asA^TF(I-\asH)\asV^{-1/2}\E\left\{(y-\mu)(y-\mu)^T\right\}\asV^{-1/2}(I-\asH)F\asA+o_{p}(1)\\
    &=n^{-1}\asA^TF(I-\asH)B(I-\asH)F\asA+o_{p}(1).
\end{align*}
Take the difference
\begin{align*}
\Var\left(S(\II)\right)-\Var\left(S(F)\right)&=n^{-1}\asA^T\left\{B-F(I-\asH)B(I-\asH)F\right\}\asA+o_{p}(1)\\
&=n^{-1}\asA^T\left[F\{\asH B+(I-\asH)B\asH\}F\right]\asA+o_{p}(1).
\end{align*}
We notice that by Lemma \ref{lem:diag} 
\begin{equation*}
    \sum_{F \in \mathcal{F}} F(\asH B+B\asH-\asH B\asH)F = 2^n\,\mathrm{diag}(\asH B+B\asH-\asH B\asH)
\end{equation*}
where the $i$-th element of that diagonal matrix is 
\begin{equation*}
    2\ash_{ii}b_i-\sum_{k=1}^n \ash^2_{ik}b_k.
\end{equation*}
Then we have
\begin{align*}
\E \left[\Var\left\{S(\II)\right\}-\Var\left\{S(\flip)\mid\obsFlip\right\}\right] &= (2^n)^{-1}\sum_{F \in \flipSpace}\left[ \Var\left\{S(\II)\right\}-\Var\left\{S(\obsFlip)\right\}\right] \\
&= n^{-1}\asA^T\mathrm{diag}(\tilde{H}B+B\tilde{H}-\tilde{H}B\tilde{H})\asA+o_{p}(1) \\
&\le n^{-1}\asA^T\mathrm{diag}(\tilde{H}B+B\tilde{H}) \asA+o_{p}(1).
\end{align*}
By Assumption \ref{ass:dispersion}, note that there exists two finite positive constants $c_1, c_2$ such that for each $i$-th element
\begin{equation*}
    2\ash_{ii} b_i \le c_1 b_{sup} \ash_{ii}\le c_2 \sup_{1\le i \le n} \ash_{ii}=c_2  \ash_{sup} 
\end{equation*}
where 
\begin{equation*}
    \ash_{\sup}=\sup_{1\le i \le n} (\ash_{ii}) \quad b_{\sup}=\sup_{1\le i \le n} (b_i).
\end{equation*} 
Therefore we can derive the upper bound
\begin{equation*}
    \E \left[\Var\left\{S(\II)\right\}-\Var\left\{S(\flip)\mid\obsFlip\right\}\right]\le n^{-1}\|\asA\|^2 c_2 \cdot  \ash_{\sup}.
\end{equation*}
Using Assumption \ref{ass:fisherinfo} and Lemma \ref{lem:hspar}, the limiting behavior is
\begin{equation*}
    \lim_{n\xrightarrow{}\infty} n^{-1}\|\asA\|^2 c_2 \cdot \ash_{\sup} = 0.
\end{equation*}
Meanwhile, for two positive constants $c_3,c_4$ we have for each $i$-th element
\begin{equation*}
     -\sum_{k=1}^n \ash^2_{ik}b_k \ge -\sum_{k=1}^n \ash^2_{ik}c_3 b_{\sup} =-c_3 b_{\sup}\ash_{ii} \ge -c_4 \ash_{\sup}.
\end{equation*}
Then using again Lemma \ref{lem:diag} we can derive the lower bound
\begin{align*}
\E \left[\Var\left\{S(\II)\right\}-\Var\left\{S(\flip)\mid\obsFlip\right\}\right] &\ge n^{-1}\asA^T\mathrm{diag}(-\asH B \asH) \asA+o_{p}(1)\\
&\ge -n^{-1}\|\asA\|^2 c_4 \cdot  \ash_{\sup}+o_{p}(1) \, 
\end{align*}
where, using Assumption \ref{ass:fisherinfo} and Lemma \ref{lem:hspar}, the limiting behavior is
\begin{equation*}
    \lim_{n\xrightarrow{}\infty} -n^{-1}\|\asA\|^2 c_4 \cdot \ash_{\sup} = 0.
\end{equation*}
According to the law of total variance we have
\[
\Var\{S(\flip)\} = \Var\left[\E\{S(\flip)\mid \obsFlip\}\right] + \E\left[\Var\{S(\flip)\mid \obsFlip\}\right].
\]
We know that $\E\{S(\flip)\mid \obsFlip\}$ does not depend on $\flip$, which means that $\Var[\E\{S(\flip)\mid \obsFlip\}]=0$, so $\Var\{S(\flip)\} = \E[\Var\{S(\flip)\mid \obsFlip\}].$ It follows that, marginally over $\obsFlip$,
\[
\lim_{n\xrightarrow{}\infty} \Var\left\{S(I)\right\}-\Var\left\{S(\flip)\right\} = 0.
\]
The same argument given in the last part of the proof of theorem \ref{theo:asympt_eff} applies to deduce the asymptotic exactness of the test considered.
\end{proof}

\subsection{Proof of Proposition \ref{prop:mult_standard}}
\begin{prop*}
Assume that the variances are correctly specified, that is, $\asV=V$, and that Assumptions \ref{ass:dispersion}, \ref{ass:fisherinfomult} and \ref{ass:lindebergmult} hold. The $d$-dimensional standardized sign-flip score vector is finite sample second-moment null-invariant. The test that rejects $H_0$ if \eqref{example.mult} holds is an asymptotically $\alpha$ level test.
\end{prop*}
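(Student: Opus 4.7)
The plan is to mirror the argument of Proposition \ref{prop:standard} and adapt each step to the $d$-dimensional setting. First, I would establish finite-sample second-moment null-invariance. By construction of the sign-flips, $\E\{S(\obsFlip)\}=\mathbf{0}$ for every $\obsFlip\in\flipSpace$. For the covariance, the same calculation as in Theorem \ref{theo:asympt_eff} yields
\begin{equation*}
\Var\{S(\obsFlip)\} = n^{-1}X^T W^{1/2}(I-H)\obsFlip(I-H)\obsFlip(I-H)W^{1/2}X + o_p(1),
\end{equation*}
now interpreted as a $d\times d$ positive definite matrix (positive definiteness following from Assumption \ref{ass:fisherinfomult}). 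By the very definition $S^*(\obsFlip)=\Var\{S(\obsFlip)\}^{-1/2}S(\obsFlip)$, the standardized vector has mean $\mathbf{0}$ and identity covariance for every $\obsFlip$, which is exactly second-moment null-invariance in finite samples.

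Second, I would establish the asymptotic joint distribution of $(S^*(\II),S^*(\obsFlip_2),\dots,S^*(\obsFlip_g))$. By the multivariate Lindeberg--Feller central limit theorem, whose hypotheses are provided by Assumption \ref{ass:lindebergmult}, each marginal $S^*(\obsFlip_i)$ converges to the standard $d$-variate normal. Because distinct flip matrices are drawn independently and the effective score is a sum of independent contributions, cross-flip covariances vanish by the same argument as in the univariate case. Hence the joint limit is a product of independent $\mathcal{N}_d(\mathbf{0},I_d)$ vectors.

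Third, I would pass to the test statistics via continuous mapping. Since $T^*(\obsFlip_i)=\{S^*(\obsFlip_i)\}^T M \{S^*(\obsFlip_i)\}$ is a fixed quadratic form in $S^*(\obsFlip_i)$, the continuous mapping theorem gives joint convergence of $(T^*_1,\dots,T^*_g)$ to a vector of i.i.d.\ random variables, each distributed as $Z^T M Z$ with $Z\sim \mathcal{N}_d(\mathbf{0},I_d)$. The asymptotic exchangeability thus obtained allows me to invoke Lemma~1 of \cite{Hemerik.etal:2020} to conclude that the rejection rule \eqref{example.mult} defines an asymptotically $\alpha$ level test.

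The main obstacle I anticipate is controlling the matrix square root $\Var\{S(\obsFlip)\}^{-1/2}$ uniformly over flips: we need $\Var\{S(\obsFlip)\}$ to be positive definite (and bounded away from singular) for all relevant $\obsFlip$ and for $n$ large enough. The natural route is to show, by an argument analogous to the matrix version of the variance-difference bound in Theorem \ref{theo:asympt_eff}, that $\Var\{S(\obsFlip)\}-\Var\{S(\II)\}\to 0$ in a norm sense uniformly in $\obsFlip$; combined with Assumption \ref{ass:fisherinfomult}, which fixes a positive definite limit for $\Var\{S(\II)\}$, this secures the well-posedness of the standardization and the applicability of continuous mapping to the inverse square root.
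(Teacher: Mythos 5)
Your proposal is correct and follows essentially the same route as the paper, whose own proof is only a two-line deferral to the analogue in \cite{Hemerik.etal:2020} plus the continuous mapping theorem: standardization gives finite-sample second-moment null-invariance, the multivariate Lindeberg--Feller CLT under Assumption \ref{ass:lindebergmult} gives joint convergence of the flipped score vectors to i.i.d.\ $\mathcal{N}_d(\mathbf{0},I_d)$ limits, the quadratic forms $T^*(\obsFlip)$ then converge jointly by continuous mapping, and Lemma~1 of \cite{Hemerik.etal:2020} yields the asymptotic $\alpha$ level. Your closing remark on the well-posedness of $\Var\{S(\obsFlip)\}^{-1/2}$ is a legitimate point the paper glosses over, and your proposed resolution via the vanishing variance difference together with Assumption \ref{ass:fisherinfomult} is the right one.
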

\begin{proof}
The proof is analogous to Theorem 3 of \cite{Hemerik.etal:2020}.
\end{proof}

\subsection{Proof of Theorem \ref{theo:mult_robust}}
\begin{thm*}
Assume that the variances are misspecified, that is, $V \ne \asV$ and that Assumptions \ref{ass:dispersion},\ref{ass:fisherinfomult},\ref{ass:lindebergmult} hold. For $n\xrightarrow{} \infty$, the $d$-dimensional standardized sign-flip score vector is asymptotically second-moment null-invariant. The test that rejects $H_0$ if \eqref{example.mult} holds is an asymptotically $\alpha$ level test.
\end{thm*}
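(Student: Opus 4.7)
The plan is to follow the same template as the proof of Theorem \ref{theo:robust}, but carried out for the $d$-dimensional score vector $S(F)$, and then to pass from asymptotic second-moment null-invariance of $S^*(F)$ to asymptotic validity of the test based on the quadratic form $T^*(F)$, mirroring the strategy of Proposition \ref{prop:mult_standard}.

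First, I would observe, as in the univariate case, that $E\{S(F)\} = 0$ for every flip $F$, since each component is a sum of sign-flipped mean-zero contributions, so the first-moment null-invariance is automatic. Next I would compute the $d\times d$ conditional covariance matrix of $S(F)$ under the misspecified variance. Introducing $B = V\asV^{-1}$ (a diagonal matrix with entries bounded above and away from zero by Assumption \ref{ass:dispersion}) and letting $\asA = (I-\asH)\asW^{1/2}X$ now be an $n\times d$ matrix, the same manipulation used in the univariate Theorem \ref{theo:robust} yields
\begin{equation*}
\Var\{S(\II)\} - \Var\{S(F)\} = n^{-1}\asA^T\bigl[F\{\asH B + (I-\asH)B\asH\}F\bigr]\asA + o_p(1),
\end{equation*}
which is now a $d\times d$ matrix identity.

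Second, I would average this difference over $F\in\flipSpace$ using Lemma \ref{lem:diag}, which reduces the off-diagonal pieces of $\asH B + B\asH - \asH B\asH$ to zero and leaves only diagonal entries of the form $2\ash_{ii}b_i - \sum_k \ash_{ik}^2 b_k$. Exactly as in the proof of Theorem \ref{theo:robust}, the scalar bounds $K_1 \le b_i \le K_2$ let me sandwich the averaged matrix between $\pm c\,n^{-1}\asA^T\asA\,\ash_{\sup}$ times a constant (more precisely, between two $d\times d$ matrices whose operator norm is bounded by a constant multiple of $n^{-1}\|\asA\|_{\mathrm{op}}^2 \ash_{\sup}$). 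Assumption \ref{ass:fisherinfomult} ensures $n^{-1}\asA^T\asA$ stays bounded (and converges to a positive definite limit), and Lemma \ref{lem:hspar} gives $\ash_{\sup}\to 0$, so both bounds vanish. The law of total variance then yields $\Var\{S(\II)\} - \Var\{S(F)\} \to \mathbf{0}$ marginally over $F$, i.e.\ asymptotic second-moment null-invariance of $S(F)$, and after standardization the covariance of $S^*(F)$ tends to $I_d$.

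Third, I would apply the multivariate Lindeberg--Feller CLT (the hypotheses of which are exactly Assumption \ref{ass:lindebergmult}) to the stacked vector $(S(\II)^T, S(F_2)^T, \ldots, S(F_g)^T)^T$. Since distinct flips generate uncorrelated score vectors and their common limiting covariance is a positive definite matrix, the joint limit is a multivariate normal with $g$ independent, identically distributed $d$-dimensional Gaussian blocks. Standardizing by a consistent estimate of $\Var\{S(F)\}^{1/2}$ (well-defined in the limit by Assumption \ref{ass:fisherinfomult}) preserves this limit by Slutsky, so the vectors $S^*(\II), S^*(F_2), \ldots, S^*(F_g)$ are asymptotically i.i.d.\ standard normal. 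The continuous mapping theorem applied to the quadratic form $T^*(F) = S^*(F)^T M S^*(F)$ then gives $T^*(\II), T^*(F_2), \ldots, T^*(F_g)$ asymptotically i.i.d., and Lemma 1 of \cite{Hemerik.etal:2020} delivers the asymptotic $\alpha$-level property of the test defined by \eqref{example.mult}.

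The main technical obstacle I anticipate is matrix bookkeeping in step two: in the univariate proof the sandwiching was on scalars, whereas here I must argue that the average difference of $d\times d$ covariance matrices vanishes in norm, and that the resulting asymptotic standardization $\Var\{S(F)\}^{-1/2}$ is uniformly well-behaved (so that inversion commutes with the limit). Both reduce to controlling operator norms of $n^{-1}\asA^T(\cdot)\asA$ using Assumption \ref{ass:fisherinfomult} together with $\ash_{\sup}\to 0$, but the bookkeeping deserves a careful write-up.
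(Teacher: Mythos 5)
Your proposal is correct and follows essentially the same route as the paper's proof: zero mean, the decomposition $B - F(I-\asH)B(I-\asH)F = F(\asH B + B\asH - \asH B\asH)F$, averaging over flips via Lemma \ref{lem:diag}, sandwiching with $\ash_{\sup}\to 0$ from Lemma \ref{lem:hspar}, and then the CLT, continuous mapping theorem and Lemma 1 of \cite{Hemerik.etal:2020}. The only difference is presentational: the paper handles the $d\times d$ covariance matrix elementwise (diagonal entries from Theorem \ref{theo:robust}, off-diagonal covariance terms separately), whereas you treat it as a single matrix identity controlled in operator norm, which is equivalent.
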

\begin{proof}
It is trivial to observe that the expected value is a $d$-dimensional zero vector.
We focus on the variance, which now is a $d\times d$ matrix. We prove the robustness elementwise.

The proof for the diagonal elements follows from Theorem \ref{theo:robust}.
Then, we focus on one covariance term of the covariance matrix of the flipped effective score vector. Let $X=(X_1,\dots,X_d)^T$ and
\begin{align*}
    \asA_1&= (I-\asH)\asW^{1/2}X_1 \\
    \asA_2&=(I-\asH)\asW^{1/2}X_2.
\end{align*}
When the variances are misspecified
\begin{align*}
    &\mbox{cov}\{S^1(F),S^2(F)\}=\\
    &\; =n^{-1}\asA_1^TF(I-\asH)\asV^{-1/2}\E\left\{(y-\mu)(y-\mu)^T\right\}\asV^{-1/2}(I-\asH)F\asA_2+o_{p}(1)\\
    &\;=n^{-1}\asA_1^TF(I-\asH)B(I-\asH)F\asA_2+o_{p}(1).
\end{align*}
Take the difference
\begin{align*}
&\mbox{cov}\{S^1(I),S^2(I)\}-\mbox{cov}\{S^1(F),S^2(F)\}=\\
&\;=n^{-1}\asA_1^T\left\{B-F(I-\asH)B(I-\asH)F\right\}\asA_2+o_{p}(1)
\\
&\;=n^{-1}\asA_1^T\left\{F(\asH B+B\asH-\asH B \asH)F\right\}\asA_2+o_{p}(1).
\end{align*}

From this point the asymptotic second-moment null-invariance can be derived directly from the proof of the Theorem \ref{theo:robust}, by applying the same reasoning to each covariance term, replacing Assumptions \ref{ass:fisherinfo}-\ref{ass:lindeberg} with Assumptions \ref{ass:fisherinfomult}-\ref{ass:lindebergmult}.
The asymptotic exactness of the test follows from the proof of Proposition \ref{prop:mult_standard}.
\end{proof}

\end{document}